\newtheorem{definition}{Definition}[section]
\newtheorem{proposition}[definition]{Proposition}
\newtheorem{lemma}[definition]{Lemma}
\newtheorem{theorem}{Theorem}
\def\squareforqed{\hbox{\rlap{$\sqcap$}$\sqcup$}}
\def\qed{\ifmmode\squareforqed\else{\unskip\nobreak\hfil
\penalty50\hskip1em\null\nobreak\hfil\squareforqed
\parfillskip=0pt\finalhyphendemerits=0\endgraf}\fi}
\def\endenv{\ifmmode\;\else{\unskip\nobreak\hfil
\penalty50\hskip1em\null\nobreak\hfil\;
\parfillskip=0pt\finalhyphendemerits=0\endgraf}\fi}
\newenvironment{proof}[1][Proof]{\noindent \textbf{{#1~} }}{\qed}
\newcommand{\bra}[1]{\langle #1|}
\newcommand{\ket}[1]{|#1\rangle}
\newcommand{\braket}[2]{\langle #1|#2\rangle}
\newcommand{\tr}{\text{tr}}
\newcommand{\id}{\mathbb{I}}
\mathchardef\ordinarycolon\mathcode`\:
\def\vcentcolon{\mathrel{\mathop\ordinarycolon}}
\newcommand{\nc}{\newcommand}
\nc{\rnc}{\renewcommand} \nc{\beq}{\begin{equation}}
\nc{\eeq}{{\end{equation}}} \nc{\bea}{\begin{eqnarray}}
\nc{\eea}{\end{eqnarray}} \nc{\beqa}{\begin{eqnarray}}
\nc{\eeqa}{\end{eqnarray}} \nc{\lbar}[1]{\overline{#1}}
\nc{\conv}{\operatorname{conv}}
\nc{\smfrac}[2]{\mbox{$\frac{#1}{#2}$}} \nc{\Tr}{\operatorname{Tr}}
\nc{\ox}{\otimes} \nc{\dg}{\dagger} \nc{\dn}{\downarrow}
\nc{\lmax}{\lambda_{\text{max}}}
\nc{\lmin}{\lambda_{\text{min}}}
\nc{\csupp}{{\operatorname{csupp}}}
\nc{\qsupp}{{\operatorname{qsupp}}} \nc{\var}{\operatorname{var}}
\nc{\rar}{\rightarrow} \nc{\lrar}{\longrightarrow}
\nc{\poly}{\operatorname{poly}}
\nc{\polylog}{\operatorname{polylog}} \nc{\Lip}{\operatorname{Lip}}
\nc{\mb}[1]{\mathbf{#1}}
\nc{\ep}{\epsilon}
\nc{\Om}{\Omega}
\nc{\wt}[1]{\widetilde{#1}}
\def\>{\rangle}
\def\<{\langle}
\nc{\glneq}{{\raisebox{0.6ex}{$>$}  \hspace*{-1.8ex} \raisebox{-0.6ex}{$<$}}}
\nc{\gleq}{{\raisebox{0.6ex}{$\geq$}\hspace*{-1.8ex} \raisebox{-0.6ex}{$\leq$}}}
\nc{\RR}{{{\mathbb R}}}
\nc{\FF}{{{\mathbb F}}}
\nc{\HH}{{{\mathbb H}}}
\nc{\NN}{{{\mathbb N}}}
\nc{\ZZ}{{{\mathbb Z}}}
\nc{\PP}{{{\mathbb P}}}
\nc{\QQ}{{{\mathbb Q}}}
\nc{\UU}{{{\mathbb U}}}
\nc{\WW}{{{\mathbb W}}}
\nc{\EE}{{{\mathbb E}}}
\rnc{\SS}{{{\mathbb S}}}
\nc{\vholder}[1]{\rule{0pt}{#1}}
\nc{\wh}[1]{\widehat{#1}}
\nc{\h}[1]{\widehat{#1}}
\nc{\ob}[1]{#1}
\def\beq{\begin {equation}}
\def\eeq{\end {equation}}
\def\be{\begin{equation}}
\def\ee{\end{equation}}
\nc{\eq}[1]{Eq.~(\ref{eq:#1})} \nc{\eqs}[2]{Eqs.~(\ref{eq:#1}) and
(\ref{eq:#2})}
\nc{\eqn}[1]{Eq.~(\ref{eqn:#1})}
\nc{\eqns}[2]{Eqs.~(\ref{eqn:#1}) and (\ref{eqn:#2})}
\nc{\region}{\cS\cW}
\begin{document}

\title{{\Large On Hastings' counterexamples to the minimum output entropy additivity conjecture}}

\author{Fernando G.S.L. Brand\~ao}
\email{fernando.brandao@imperial.ac.uk}
\affiliation{Institute for Mathematical Sciences, Imperial
College London, London SW7 2BW, UK}
\affiliation{QOLS, Blackett Laboratory, Imperial College
London, London SW7 2BW, UK}

\author{Micha\l{} Horodecki}
 \email{fizmh@ug.edu.pl}
\affiliation{Institute for Theoretical Physics and Astrophysics, University of Gda\'nsk, 80-952 Gda\'nsk, Poland}


\begin{abstract}
Hastings recently reported a randomized construction of channels violating the minimum output entropy additivity conjecture. Here we revisit his argument, presenting a simplified proof. In particular, we do not resort to the exact probability distribution of the Schmidt coefficients of a random bipartite pure state, as in the original proof, but rather derive the necessary large deviation bounds by a concentration of measure argument. Furthermore, we prove non-additivity for the overwhelming majority of channels consisting of a Haar random isometry followed by partial trace over the environment, for an environment dimension much bigger than the output dimension. This makes Hastings' original reasoning clearer and extends the class of channels for which additivity can be shown to be violated. 
\end{abstract}

\maketitle

\parskip .75ex


\section{Introduction}

The oldest problem in quantum information theory is probably the determination of the capacity of a quantum-mechanical channel for classical information transmission. Given a quantum channel from a sender to a receiver, characterized by a trace preserving completely positive map ${\cal E}$, its classical capacity is defined as the maximum number of bits which can be reliably sent per use of the channel, in the limit of infinitely many realizations of the channel. Holevo \cite{Hol98} and Schumacher-Westmoreland \cite{SW97} proved the following formula for the classical information transmission capacity:    
\begin{equation} \label{capacity}
C({\cal E}) = \chi^{\infty}({\cal E}) := \lim_{n \rightarrow \infty} \frac{\chi^{\infty}({\cal E}^{\otimes n})}{n},
\end{equation}
where the Holevo $\chi$-quantity \cite{Hol73} is defined by
\begin{equation}
\chi({\cal E}) := \max_{\{ p_i, \rho_i \}} S\left( {\cal E}\left( \sum_i p_i \rho_i  \right)\right) - \sum_i p_i S \left( {\cal E}\left( \rho_i \right)\right),
\end{equation}
with $S$ being the von Neumann entropy and the maximization ranging over all ensembles $\{ p_i, \rho_i \}$.

An important question concerning the capacity formula given by Eq. (\ref{capacity}) is whether the \textit{regularization} of the $\chi$ quantity to infinitely many uses of the channel is really needed in the right-hand-side of Eq. (\ref{capacity}). Indeed, such necessity would render the evaluation of the formula given by Eq. (\ref{capacity}) in general intractable; moreover, it would show that we do not fully understand the structure of the optimal coding strategy, since from Eq. (\ref{capacity}) we cannot say anything about the - in general entangled - states $\rho_i$ appearing in the optimal ensemble. On a more positive note, the need of regularization would also show that we can boost the information transmission capacity by using entangled encoding states.  

Based on numerical evidence in low dimensions and several results for particular classes of channels (e.g. \cite{AHW00, King02, Sho02, King03, 
DHS04, MY04, Alicki04, Dat04, AF05, KMNR05, DR05, WE05}), it was conjectured the $\chi$-quantity is additive, i.e. for every pair of channels ${\cal E}_1$, ${\cal E}_2$,
\begin{equation} \label{fulladditivity}
\chi({\cal E}_{1} \otimes {\cal E}_2) = \chi({\cal E}_1) + \chi({\cal E}_2).
\end{equation}
The validity of this conjecture would imply that the classical capacity of a quantum channel is given simply by its Holevo $\chi$-quantity, which would constitute a \textit{single-letter} formula for the capacity. It turns out that Eq. (\ref{fulladditivity}) is in fact equivalent to the to the non-necessity of the limit in Eq. (\ref{capacity}) \cite{FW07}: $C({\cal E}) = \chi({\cal E})$ for every channel ${\cal E}$ if, and only if, Eq. (\ref{fulladditivity}) holds true for every pair of channels ${\cal E}_1$, ${\cal E}_2$ (see also \cite{BHPV07}).

The additivity of the $\chi$-quantity can be related to other additivity questions. The first concerns the entanglement cost of a bipartite quantum state $\rho$ shared by Alice and Bob. It is defined as the optimal rate of EPR pairs needed for the formation of $\rho$, in the limit of asymptotically many copies of the state, by local operations and classical communication between Alice and Bob. It was shown in \cite{HHT01} that the entanglement cost is given by 
\begin{equation} \label{cost}
E_C(\rho) := \lim_{n \rightarrow \infty} \frac{E_F(\rho^{\otimes n})}{n},
\end{equation}
where the entanglement of formation \cite{BVSW96} is defined as
\begin{equation}
E_F(\rho) := \min_{\{ p_i, \rho_i \}} \sum_i p_i S\left(\tr_A \left(\ket{\psi_i}\bra{\psi_i} \right) \right),
\end{equation}
with the minimization taken over all pure state ensembles of $\rho$. As shown by Shor in Ref. \cite{Sho04} (building on \cite{MSW04, AB04, Pom03}), the additivity of the entanglement of formation is equivalent to the additivity of $\chi$ as given by Eq. (\ref{fulladditivity}).

The second additivity question concerns the distillable common randomness of a bipartite state, given by the optimal rate of maximally correlated classical bits that can be extracted from a bipartite quantum state, when classical communication is allowed from Alice to Bob (the rate of bits communicated being subtracted from the rate of maximally correlated bits obtained in the end of the protocol). Devetak and Winter proved that \cite{DW03}
\begin{equation} \label{distillablernadomness}
C_D^{\rightarrow}(\rho) := \lim_{n \rightarrow \infty} \frac{I^{\rightarrow}(\rho^{\otimes n})}{n},
\end{equation}
with 
\begin{equation} \label{HV}
I^{\rightarrow}(\rho) := \max_{\{  M_i \}} \left( S(\rho_A) - \sum_i p_i S(\rho_i) \right),
\end{equation}
where the maximization runs over POVMs $\{ M_i \}$ applied to Alice's system, $p_i := \tr(\rho( M_i \otimes \id))$ and $\rho_i := \tr_{A}(\rho(M_i \otimes \id))/p_i$ \cite{HV01}. In Ref. \cite{KW04} Koashi and Winter derived a beautiful relation between the entanglement of formation and the quantity given in Eq. (\ref{HV}), showing in particular the equivalence of the need of the limit in Eq. (\ref{distillablernadomness}) to the validity of Eq. (\ref{fulladditivity}) for every pair of channels.

An important simplification of the additivity problem, due to Shor \cite{Sho04}, shows that the additivity of the $\chi$-quantity is equivalent to a simpler question: the additivity of the minimum output entropy, defined as \cite{Sho04}
\begin{equation} \label{minoutput}
S_{\min}({\cal E}) := \min_{\rho} S({\cal E}(\rho)).
\end{equation}
It turns out that Eq. (\ref{fulladditivity}) holds true if, and only if, for every pair of channels ${\cal E}_1, {\cal E}_2$ 
\begin{equation} \label{additivityminoutputentropy}
S_{\min}({\cal E}_1 \otimes {\cal E}_2) =  S_{\min}({\cal E}_1) + S_{\min}({\cal E}_2).
\end{equation}

Recently, based on similar results on R\'enyi entropies by Winter \cite{Win07} and Hayden \cite{Hay07} (see also \cite{HW08, CN09a, CN09b}), Hastings  proved the breakthrough result that the minimum output entropy is not additive \cite{Has09}: in general, Eq. (\ref{additivityminoutputentropy}) does not hold true. This in turn implies that the limits in Eqs. (\ref{capacity}), (\ref{cost}), and (\ref{distillablernadomness}) are needed and thus that we are unfortunately further away from grasping these three capacities than what we might have expected.

Hastings argument combines the approach of Winter \cite{Win07} and Hayden \cite{Hay07} to the problem with powerful new ideas and 
techniques to construct randomized examples of channels violating Eq. (\ref{additivityminoutputentropy}). In particular, his argument is heavily based on an exact expression for the eigenvalue probability distribution of the reduced density matrix of a Haar distributed bipartite state \cite{LP88}. The main goal of the present paper is to revisit Hastings' proof by employing instead more general properties 
of the Haar distribution, such as large deviations bounds for the concentration of well-behaved functions around their mean-values in high dimensions.
This allows us to present the proof in a relatively concise form. Moreover, we will be able to strengthen slightly Hastings' result and 
prove non-additivity of the overwhelming majority of Haar random channels (for appropriate input, output, and environment dimensions). 
As a by-product, we also obtain a new result concerning the concentration of measure phenomenon in high dimensional quantum states, 
which may be of independent interest.


We would like to refer the reader to an earlier paper by Fukuda, King, and Moser of a similar spirit \cite{FKM09}, where Hastings' original argument is explained in great detail and rigor. In particular, the authors derived explicit lower bounds to the input, output and environment dimensions for which channels violating additivity can be constructed. Our approach is unlikely to provide better estimates than the ones found in Ref. \cite{FKM09}, as it does not rely on the exact probability distribution of the Schmidt coefficients of a Haar bipartite state. However, as our proof differs from the original in a few places, the optimization of the dimensions in our version of the proof may still be an interesting task (which we do not pursue here however).

\textbf{Notation:}
We denote the set of density matrices acting on a Hilbert space ${\cal H}$ by $D({\cal H})$. Moreover, we will often write $A$ and $B$ for finite dimensional Hilbert spaces, $A \otimes B$ or $AB$ for their tensor product, and $|A|, |B|$ for their dimensions. For a pure state $\ket{\psi^{AB}} \in AB$, we define $\psi^{AB} := \ket{\psi^{AB}}\bra{\psi^{AB}}$, while $\psi^A$ will denote $\tr_{B}(\psi^{AB})$, where $\tr_{B}$ is the partial trace over subsystem $B$. We denote the $d$-dimensional unitary group by $\mathbb{U}(d)$. We define the entropy deviation from its maximal value of a state $\rho \in {\cal D}(\mathbb{C}^d)$ by $\delta S(\rho) := \log(d) - S(\rho)$. Let $\mathbb{S}^{n} := \{ x \in \mathbb{R}^{n+1} : ||x||_2 = 1  \}$ denote the Euclidean sphere in $\mathbb{R}^{n+1}$ and $\mu$ denote the normalized rotationally invariant measure in $\mathbb{S}^{n}$ (the Haar measure). Finally, the Bachmann-Landau notation $g(n) = o(f(n))$ stands for $\forall k > 0, \exists n_0 : \forall n > n_0, \hspace{0.1 cm} g(n) \leq k f(n)$.

\textbf{Structure of the paper:} In section \ref{sec:defs} we present the main results of the paper as well as the key definitions used in the proofs. The counterexamples to the additivity conjecture are given by the combination of three propositions \ref{MainLemma1}, \ref{MainLemma2}, and \ref{largedeviationbound}, which are proven in sections \ref{proofmain1}, \ref{proofmainlenmma2}, and \ref{latgedeviationproof}, respectively. 

\section{Definitions and Main results}
\label{sec:defs}
We will consider channels from $A$ to $B$ of the form 
\begin{equation} \label{definitionE}
{\cal E}(\rho) = \tr_{A}\left(U\left(\rho^{A} \otimes \ket{0}\bra{0}^{B}\right)U^{\cal y}\right)
\end{equation}
for a unitary $U \in \mathbb{U}(|A||B|)$. The channels thus have input and environment dimensions equal to $|A|$ and output dimension equals to $|B|$. Moreover, we will make use the conjugate channel of ${\cal E}$, defined as
\begin{equation} 
\overline{{\cal E}}(\rho) = \tr_{A}\left(U^*\left(\rho^{A} \otimes \ket{0}\bra{0}^{B}\right)U^T\right).
\end{equation}

The counterexamples to the minimum output entropy additivity conjecture will be constructed by selecting the unitary $U$ at random from the Haar measure in $\mathbb{U}(|A| |B|)$ and considering the regime of a very large environment dimension $|A| \gg |B|$. 

Throughout the paper $c_0 > 0$ will denote a fixed constant which can be taken to be e.g. $c_0 = 1333$, while the Landau notation $o(1)$ will stand for a term which can be taken as small as desired by choosing $|A|$ large enough. Hastings theorem can be stated as follows.
\begin{theorem} \label{maintheorem}
For $U$ drawn from the Haar measure in $\mathbb{U}(|A| |B|)$, consider a channel as in Eq. (\ref{definitionE}). Then, for $c \geq c_0$, with probability $1 - o(1)$, 
\begin{equation} \label{eqmaintheorem}
S_{\min}({\cal E} \otimes \overline{{\cal E}}) \leq S_{\min}({\cal E}) + S_{\min}(\overline{{\cal E}}) - \frac{\log |B| - 2c}{|B|}.
\end{equation}
\end{theorem}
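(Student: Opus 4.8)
The plan is to derive Eq.~(\ref{eqmaintheorem}) from the three propositions \ref{MainLemma1}, \ref{MainLemma2}, and \ref{largedeviationbound} announced in the introduction. The strategy follows the now-standard ``antisymmetric subspace'' approach of Winter and Hayden: to upper bound $S_{\min}({\cal E} \otimes \overline{{\cal E}})$ we do not minimize over all inputs but simply plug in one cleverly chosen test state, namely the maximally entangled state $\ket{\Phi^{AA}}$ on the two copies of the input system $A$. Under the conjugate-channel construction, the output ${\cal E} \otimes \overline{{\cal E}}(\Phi^{AA})$ has a large eigenvalue---of order $1/|B|$ up to lower-order corrections---coming from the fact that the Kraus operators of $\overline{{\cal E}}$ are the complex conjugates of those of ${\cal E}$, so that $(\mathbb{1}\otimes\Phi)$-type overlaps survive averaging. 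Proposition \ref{MainLemma2} (or its analogue) should make this precise, giving $S({\cal E}\otimes\overline{\cal E}(\Phi^{AA})) \le 2\log|B| - (\log|B|)/|B| + O(1/|B|)$ with high probability. Thus the left-hand side is controlled.

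For the right-hand side I would use Proposition \ref{MainLemma1} together with the large-deviation estimate of Proposition \ref{largedeviationbound} to show that, with probability $1-o(1)$, a single random channel ${\cal E}$ of this form has minimum output entropy very close to maximal: $S_{\min}({\cal E}) \ge \log|B| - \frac{c}{|B|}$ for an appropriate constant, and the same bound for $\overline{{\cal E}}$ (whose distribution is the same as that of ${\cal E}$, since complex conjugation preserves the Haar measure on $\mathbb{U}(|A||B|)$). The point is that for a fixed input pure state the output $\psi^B$ is the reduced state of a Haar-random state on $B$ together with a huge environment, hence concentrated near maximally mixed, so $\delta S({\cal E}(\psi)) = \log|B| - S({\cal E}(\psi))$ is typically of order $1/|A|$ for each fixed input; a union bound / net argument over an $\varepsilon$-net of input states---this is where the $c_0$ and the precise scaling of $|A|$ versus $|B|$ enter---upgrades this to a statement holding simultaneously for all inputs, i.e.\ to a bound on $S_{\min}$ itself. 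Proposition \ref{largedeviationbound} is presumably exactly the concentration-of-measure input that powers this union bound, replacing Hastings' use of the exact Schmidt-coefficient distribution.

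Combining the two sides: with probability at least $1-o(1)$ (intersecting the two high-probability events and absorbing the $o(1)$'s),
\begin{align}
S_{\min}({\cal E}\otimes\overline{\cal E}) - S_{\min}({\cal E}) - S_{\min}(\overline{\cal E})
&\le \left(2\log|B| - \tfrac{\log|B|}{|B|} + \tfrac{O(1)}{|B|}\right) - 2\left(\log|B| - \tfrac{c}{|B|}\right) \nonumber \\
&= -\,\frac{\log|B| - 2c}{|B|} + \frac{O(1)}{|B|},
\end{align}
and tracking the constants carefully (this is where the choice $c_0 = 1333$ is pinned down, so that the $O(1)/|B|$ correction is safely dominated for $c \ge c_0$) yields exactly Eq.~(\ref{eqmaintheorem}). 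The main obstacle, and the step I would budget the most care for, is the union-bound/net step for the lower bound on $S_{\min}({\cal E})$: one must show that the deviation $\delta S({\cal E}(\psi))$ is small not just in expectation but with a failure probability small enough to beat the cardinality $e^{O(|B|\log|B|)}$ of a fine net on the input sphere, and simultaneously control the Lipschitz constant of $\psi \mapsto \delta S({\cal E}(\psi))$ so that a net argument is legitimate near the (mild) singularity of the entropy at the maximally mixed state. Getting the environment dimension $|A|$ large enough relative to $|B|$ for this to work, while keeping the estimate on the left-hand side's large eigenvalue intact, is the crux; everything else is bookkeeping with the $o(1)$ terms.
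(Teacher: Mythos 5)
Your decomposition of the theorem into an upper bound on $S_{\min}({\cal E}\otimes\overline{\cal E})$ via the maximally entangled test input and a lower bound on $S_{\min}({\cal E})$ is the right skeleton, and your remark that $\overline{\cal E}$ has the same distribution as ${\cal E}$ is correct. However, you misattribute the upper bound: it is Lemma~\ref{Hayden}, not Proposition~\ref{MainLemma2}, and it is a \emph{deterministic} statement that holds for every channel of the form~(\ref{definitionE}), with the exact value $2\log|B|-\tfrac{\log|B|}{|B|}$ and no $O(1/|B|)$ correction. Proposition~\ref{MainLemma2} plays a completely different role (see below). Consequently, the final combination is exact, $\bigl(2\log|B|-\tfrac{\log|B|}{|B|}\bigr)-2\bigl(\log|B|-\tfrac{c}{|B|}\bigr)=-\tfrac{\log|B|-2c}{|B|}$, and no ``careful constant tracking'' is needed at that stage; all of the work is pushed into Lemma~\ref{Hastings}.

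The genuine gap is in your plan for the lower bound $S_{\min}({\cal E})\ge\log|B|-c/|B|$. You propose a union bound over an $\varepsilon$-net of input states powered by Proposition~\ref{largedeviationbound}. That is the Hayden--Leung--Winter style argument, and it is precisely the approach that Hastings' tube construction was invented to replace, because it cannot yield a dimension-independent constant $c$. Two concrete problems: first, the inputs live in $A$, so the net has cardinality $\exp\bigl(\Theta(|A|\log(1/\delta))\bigr)$, not $e^{O(|B|\log|B|)}$ as you write (recall $|A|\gg|B|$); second, the map $\chi\mapsto S({\cal E}(\chi))$ has Lipschitz constant of order $\log|B|$, so to resolve entropy deviations at scale $c/|B|$ you need $\delta\sim c/(|B|\log|B|)$, which puts a $\log|B|$ into the exponent of the net size. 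Matching that against the $\exp(-\Theta(c|A|/a))$ concentration rate forces $c$ to grow like $\log|B|$, while the theorem demands a universal constant $c_0$. The paper avoids this entirely: Proposition~\ref{MainLemma2} lower bounds $\Pr_{{\cal E},\chi}\bigl({\cal E}(\chi)\in X\cap Y\bigr)$ in terms of the probability that a channel is ``bad'' (has $\delta S_{\min}\ge c/|B|$), by showing that a Haar-random input lands in the spherical cap around the \emph{single} optimal input $\chi_{\cal E}$ with probability $\ge\tfrac{1}{8|A|}e^{-\ln(2)|A|}$ and that when it does, its output lies in the tube of the low-entropy output $\sigma_{\cal E}$; Proposition~\ref{MainLemma1} upper bounds the same probability by $\exp(-c|A|/(128a)+o(1)|A|)$ using Proposition~\ref{largedeviationbound}. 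Comparing the two for $c>128\ln(2)a$ yields a contradiction unless $\Pr_{\cal E}(\delta S_{\min}\ge c/|B|)=o(1)$. There is no sum over a net; the single-cap estimate is what makes $c$ a constant, and this double-counting argument (not a union bound) is the content of Lemma~\ref{Hastings}.
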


We will prove Theorem \ref{maintheorem} by the combination of two results. The first, analogous to a similar result of Winter and Hayden \cite{Win07, Hay07, HW08} on R\'enyi entropies, delivers an upper bound on the minimum output entropy of ${\cal E} \otimes \overline{{\cal E}}$ by considering the output entropy of the canonical maximally entangled state in $A \otimes B$ as an input.

\begin{lemma} \label{Hayden}
For a channel given by Eq. (\ref{definitionE}), 
\begin{equation}
S_{\min}({\cal E} \otimes \overline{{\cal E}}) \leq 2 \log |B| - \frac{\log |B| }{|B|}.
\end{equation}
\end{lemma}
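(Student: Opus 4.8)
The plan is to adapt the Hayden--Winter strategy \cite{Hay07, Win07} for R\'enyi entropies to the von Neumann entropy. One feeds into ${\cal E}\otimes\overline{{\cal E}}$ the canonical maximally entangled state $\ket{\Phi^{A_1 A_2}}:=|A|^{-1/2}\sum_{i=1}^{|A|}\ket{i}^{A_1}\ket{i}^{A_2}$ of the two \emph{input} systems and shows that the output $\rho:=({\cal E}\otimes\overline{{\cal E}})(\Phi^{A_1A_2})$ has an anomalously large overlap with the maximally entangled state $\ket{\Phi^{B_1 B_2}}:=|B|^{-1/2}\sum_{c=1}^{|B|}\ket{c}^{B_1}\ket{c}^{B_2}$ of the two \emph{output} systems; since $S_{\min}({\cal E}\otimes\overline{{\cal E}})\leq S(\rho)$, it then suffices to estimate $S(\rho)$. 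The first step is to pick Kraus operators $E_a:=(\bra{a}^A\otimes\id^B)\,U\,(\id^A\otimes\ket{0}^B)$, $a=1,\dots,|A|$, so that ${\cal E}(\cdot)=\sum_a E_a(\cdot)E_a^{\dagger}$ and $\overline{{\cal E}}(\cdot)=\sum_a E_a^{*}(\cdot)E_a^{T}$, with $E_a^{*}$ the entrywise complex conjugate in the computational basis. Two elementary facts about these operators will be used: trace preservation of ${\cal E}$ reads $\sum_a E_a^{\dagger}E_a=\id^A$, whence $\sum_a\|E_a\|_2^{2}=|A|$ (Hilbert--Schmidt norm); and there are exactly $|A|$ of them, since the environment dimension equals the input dimension $|A|$.

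The computational heart is the overlap estimate. Expanding $\rho=|A|^{-1}\sum_{a,b}\ket{w_{ab}}\bra{w_{ab}}$ with $\ket{w_{ab}}:=\sum_i(E_a\ket{i})^{B_1}\otimes(E_b^{*}\ket{i})^{B_2}$, a one-line ``transpose trick'' computation gives $\braket{\Omega_{|B|}}{w_{ab}}=\tr(E_b^{\dagger}E_a)$, where $\ket{\Omega_{|B|}}:=\sum_c\ket{c}^{B_1}\ket{c}^{B_2}=|B|^{1/2}\ket{\Phi^{B_1B_2}}$. Consequently
\begin{equation}
\bra{\Phi^{B_1 B_2}}\rho\ket{\Phi^{B_1 B_2}}
=\frac{1}{|A||B|}\sum_{a,b}\bigl|\tr(E_a^{\dagger}E_b)\bigr|^{2}
\;\geq\;\frac{1}{|A||B|}\sum_{a}\|E_a\|_2^{4}
\;\geq\;\frac{1}{|A|^{2}|B|}\Bigl(\sum_a\|E_a\|_2^{2}\Bigr)^{2}
=\frac{1}{|B|},
\end{equation}
where the first inequality discards the non-negative off-diagonal terms and the second is Cauchy--Schwarz over the $|A|$ diagonal terms. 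I stress that this is a \emph{deterministic} bound: it holds for every channel of the form (\ref{definitionE}), with no appeal to randomness.

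It remains to turn ``$\rho$ has overlap $\geq 1/|B|$ with a fixed pure state on a $|B|^{2}$-dimensional space'' into the claimed entropy bound. With $p:=\bra{\Phi^{B_1 B_2}}\rho\ket{\Phi^{B_1 B_2}}\geq 1/|B|$, the dephasing map $X\mapsto\Phi^{B_1B_2}X\Phi^{B_1B_2}+(\id-\Phi^{B_1B_2})X(\id-\Phi^{B_1B_2})$ does not decrease von Neumann entropy, so $S(\rho)\leq H(p)+(1-p)\log(|B|^{2}-1)$ with $H(p):=-p\log p-(1-p)\log(1-p)$; since this right-hand side is non-increasing in $p$ for $p\geq 1/|B|$, evaluating at $p=1/|B|$ and simplifying yields the stated inequality. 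I do not anticipate any delicate analytic point in this argument. The one genuinely substantive step is recognizing that the maximally entangled state is the right input and exploiting the transpose-trick identity relating $\overline{{\cal E}}$ to ${\cal E}$: it is exactly because ${\cal E}$ is tensored with its \emph{complex conjugate} that the two environments ``interfere'' and beat the trivial bound $S_{\min}({\cal E}\otimes\overline{{\cal E}})\leq 2\log|B|$. Everything downstream of that identity is bookkeeping and standard entropy inequalities.
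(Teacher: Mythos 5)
Your proof is correct and follows the same overall strategy as the paper's: feed the maximally entangled state into ${\cal E}\otimes\overline{{\cal E}}$, show that the output $\rho$ has overlap at least $1/|B|$ with the maximally entangled state $\Phi^{BB'}$ of the two output spaces, and then convert this spectral fact into an entropy bound. You arrive at the overlap bound by a slightly different route: you expand $\rho$ over Kraus operators, discard the off-diagonal terms $|\tr(E_a^\dagger E_b)|^2$ with $a\neq b$, and apply Cauchy--Schwarz together with $\sum_a \Vert E_a\Vert_2^2=|A|$; the paper instead works at the level of the isometry, inserting $\Phi^{AA'}\leq\id^{AA'}$ and then exploiting the invariance of $\Phi^{AA'}\otimes\Phi^{BB'}$ under $U\otimes U^*$. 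Both derivations are deterministic and yield exactly the same bound $\geq 1/|B|$, so neither buys anything over the other here. Your dephasing/pinching step is in substance the same as the paper's Lemma~\ref{entversusmaxeing1} (proved there by Schur majorization); both produce $S(\rho)\leq (1-p)\log(|B|^2-1)+h(p)$ evaluated at $p=1/|B|$, and your monotonicity observation matches the monotonicity noted in the proof of that lemma.

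One caveat, which you share with the paper: you assert that plugging in $p=1/|B|$ ``simplifies'' to $2\log|B|-\log|B|/|B|$. It does not. A short computation gives
\begin{equation}
\left(1-\tfrac{1}{|B|}\right)\log(|B|^2-1)+h\!\left(\tfrac{1}{|B|}\right)
= 2\log|B|-\frac{\log|B|}{|B|}+\left(1-\tfrac{1}{|B|}\right)\log\!\left(1+\tfrac{1}{|B|}\right),
\end{equation}
so the bound your argument (and the paper's) actually delivers exceeds the stated one by a strictly positive $O(1/|B|)$ term. The paper makes the identical elision in the last line of Appendix~\ref{Haydenproof}. This does not affect the rest of the argument, because the extra term is bounded by $(\log e)/|B|$, which is dominated by $\log|B|/|B|$ and in any case absorbed into the constant $c_0$ of Theorem~\ref{maintheorem}; but you should state Lemma~\ref{Hayden} with the extra $O(1/|B|)$ correction, or explicitly note that the constant can be adjusted, rather than claiming the simplification is exact.
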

For completeness, we reproduce the proof of Lemma (\ref{Hayden}) in Appendix \ref{Haydenproof}. 

The second result is a probabilistic argument for the existence of channels with high minimum output entropy. This is Hastings breakthrough contribution to the problem \cite{Has09}. 

\begin{lemma} \label{Hastings}
For $U$ drawn from the Haar measure in $\mathbb{U}(|A| |B|)$, consider a channel as in Eq. (\ref{definitionE}). Then, for $c \geq c_0$, with probability $1 - o(1)$, 
\begin{equation} \label{eqmaintheorem}
S_{\min}({\cal E}) \geq \log |B| - \frac{c}{|B|}.
\end{equation}
\end{lemma}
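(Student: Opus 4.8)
The goal is to show that a Haar-random channel $\mathcal E$ of the form in Eq.~(\ref{definitionE}), with $|A|\gg|B|$, has minimum output entropy close to the maximal value $\log|B|$, up to a deficit of order $c/|B|$. The natural strategy is a net-plus-concentration argument. Fix an input pure state $\ket{\varphi}^A$; then $\mathcal E(\varphi^A) = \tr_A\bigl(U(\varphi^A\otimes\ket0\bra0^B)U^\dagger\bigr)$, and since $\varphi^A\otimes\ket0\bra0^B$ is a pure state, $U$ applied to it is a Haar-random pure state $\ket\theta$ in $A\otimes B$ (the distribution depends only on the rank-one input, not on $\varphi$ itself once we absorb a fixed reference vector). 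Thus $\mathcal E(\varphi^A)=\theta^B=\tr_A\theta^{AB}$ is the $B$-marginal of a Haar-random bipartite pure state with the large system $A$ of dimension $|A|$ traced out. I would first invoke a Lévy-type concentration of measure inequality on the sphere $\mathbb S^{2|A||B|-1}$: the function $\ket\theta\mapsto \delta S(\theta^B)=\log|B|-S(\theta^B)$ is Lipschitz (entropy is Lipschitz on density matrices in trace norm, and trace norm is controlled by the Euclidean norm on the purification), so it concentrates sharply around its mean. One shows $\mathbb E\,\delta S(\theta^B)$ is small — of order $|B|/|A|$ up to logarithmic or constant factors — which is where the hypothesis $|A|\gg|B|$ enters; this mean estimate is presumably exactly the content of the ``large deviation bound'' Proposition \ref{largedeviationbound} advertised in the introduction. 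Combining, for a single fixed input, $\Pr\bigl[\delta S(\mathcal E(\varphi))\ge c/|B|\bigr]$ is doubly-exponentially small in the relevant dimension.

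The second ingredient is a union bound over a net. One cannot quantify over all input states directly, so I would take an $\eta$-net $\mathcal N$ over the pure input states in $D(\mathbb C^{|A|})$ (or over the $|A||B|$-dimensional input space of $\mathcal E$), of cardinality $(C/\eta)^{O(|A||B|)}$, and take $\eta$ polynomially small in the dimensions. The single-state tail bound must beat this net size: since the concentration bound decays like $\exp(-\Omega(|A||B|\,\epsilon^2))$ for deviation $\epsilon$, while the net has $\exp(O(|A||B|\log(1/\eta)))$ points, one can afford a deviation threshold that is a small constant (independent of $|A|$), and still get probability $1-o(1)$ that \emph{every} net point has $\delta S$ below that threshold. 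Here is where the constant $c_0$ of order $10^3$ comes from: it has to dominate the ratio between the log-net-size rate and the concentration rate. Then a continuity/robustness step extends the bound from net points to all input states: if $\varphi$ is within $\eta$ of a net point $\varphi'$, then $\mathcal E(\varphi)$ and $\mathcal E(\varphi')$ are $O(\eta)$-close in trace norm (the channel is a contraction), hence their entropies differ by at most $O(\eta\log|B| + h(\eta))$, which is negligible compared to $c/|B|$ for $\eta$ small enough as a function of $|B|$. Putting the three pieces together — single-state concentration, union bound over the net, Lipschitz extension off the net — yields $S_{\min}(\mathcal E)=\min_\varphi S(\mathcal E(\varphi)) \ge \log|B| - c/|B|$ with probability $1-o(1)$.

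The main obstacle, and the step that genuinely requires Hastings' insight rather than routine measure concentration, is controlling the \emph{mean} $\mathbb E\,\delta S(\theta^B)$ — or more precisely, getting a bound strong enough that, after the net union bound forces the deviation threshold up to a constant, the surviving bound on $S_{\min}$ is still $\log|B| - O(1/|B|)$ rather than merely $\log|B| - O(1)$. A crude estimate of $\mathbb E\,\delta S$ via $\mathbb E\,\tr(\theta^B)^2$ (the purity, which for a random bipartite state is $\approx 1/|B| + 1/|A|$, giving $\mathbb E\,\delta S\lesssim |B|/|A|$ by a log-vs-linear entropy comparison) is the right order for the mean, but the subtlety is that the quantity of interest is not the mean of $\delta S$ but the worst case over an exponentially large family, so one needs the concentration to be genuinely strong (Gaussian tails with the full dimension $|A||B|$ in the exponent) and the Lipschitz constant of $\delta S$ as a function on the sphere to be well controlled — the entropy is only Lipschitz with a constant that degrades near the boundary of the simplex, so some care (e.g.\ working with a truncated or smoothed entropy, or using the explicit modulus of continuity of $S$) is needed. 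I expect the proof to isolate precisely this: a clean statement (Proposition \ref{largedeviationbound}) that $\Pr[\delta S(\theta^B) > t]$ is small for $t$ a constant, with the constant chosen so the net bound closes, and then the elementary extension arguments around it.
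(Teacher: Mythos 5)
Your proposal is a net-plus-concentration argument: concentrate $\delta S(\mathcal E(\varphi))$ for a single input $\varphi$, union-bound over an $\eta$-net of inputs, and extend by Lipschitz continuity. You yourself flag the central worry in your last paragraph, but you do not resolve it, and in fact it cannot be resolved along these lines: the net argument is quantitatively incapable of producing the $O(1/|B|)$ entropy deficit that Lemma~\ref{Hastings} requires. The net over pure inputs in $\mathbb C^{|A|}$ has $\exp\!\left(\Theta(|A|\log(1/\eta))\right)$ points, and the Lipschitz extension off the net only preserves a resolution of $O(1/|B|)$ in entropy if $\eta = O\!\left(1/(|B|\,\mathrm{polylog}|B|)\right)$, so $\log(1/\eta)\gtrsim\log|B|$. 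Against this, even using the sharpened Lipschitz constant of Lemma~\ref{improvedLipschitz}, the single-input tail bound at threshold $\delta S \approx c/|B|$ is $\exp\!\left(-\Theta(c|A|/a)\right)$. Closing the union bound therefore forces $c \gtrsim a\log|B|$, which is not a constant; worse, Lemma~\ref{Hayden} gives $S_{\min}(\mathcal E\otimes\overline{\mathcal E}) \le 2\log|B| - (\log|B|)/|B|$, so the counterexample only goes through if $c < (\log|B|)/2$, and a deficit $c/|B|$ with $c = \Omega(a\log|B|)$ (with $a\ge15$) fails to beat this. So your route gives, at best, $S_{\min}\ge \log|B| - O(1)$, which is not enough.

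The paper's argument is structurally different and this difference is precisely Hastings' contribution. There is no net and no union bound over inputs. Instead, one compares a \emph{conditional} and an \emph{unconditional} probability of the same event. Define the event $E := \{\mathcal E(\chi) \in X_{|B|,|A|,c}\cap Y_{|B|,a}\}$ for a random channel and a random input $\chi$. Proposition~\ref{MainLemma1} gives the unconditional upper bound $\Pr(E)\le\exp(-c|A|/(128a)+o(1)|A|)$ via the sharpened concentration (Proposition~\ref{largedeviationbound}), where the intersection with $Y_{|B|,a}$ is what buys the extra factor of $|B|$ in the exponent and removes the otherwise fatal $1/|B|$ from the rate. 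Proposition~\ref{MainLemma2} gives a lower bound on $\Pr(E)$ by conditioning on the bad event $\delta S_{\min}(\mathcal E)\ge c/|B|$: if the minimizer $\chi_{\mathcal E}$ exists with low output entropy, a Haar-random $\chi$ has $|\langle\chi_{\mathcal E}|\chi\rangle|^2\ge 1/2$ with probability at least $\frac{1}{8|A|}e^{-\ln2\,|A|}$ (Lemma~\ref{geometric}), and then $\mathcal E(\chi)$ lands in $\text{TUBE}(\mathcal E(\chi_{\mathcal E}),|A|)$. This is why the tube is defined to contain mixtures $p\sigma+(1-p)\id/|B|$ rather than just states near $\sigma$: the output $\mathcal E(\chi)$ is close to $x\,\mathcal E(\chi_{\mathcal E})+(1-x)\id/|B|$, not to $\mathcal E(\chi_{\mathcal E})$ itself. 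Comparing the two bounds gives $\Pr(\delta S_{\min}\ge c/|B|)\le o(1)+8|A|\exp\!\left((\ln2 - c/(128a))|A|+o(1)|A|\right)$, which is $o(1)$ once $c > 128a\ln2$ — a genuine constant. Both the tube construction and the operator-norm restriction $Y_{|B|,a}$ are missing from your proposal, and they are exactly the ideas needed to bypass the deficiency of the net approach you correctly sensed but did not overcome.
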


The main idea in the proof of Lemma \ref{Hastings} is to look at the probability that the output of a Haar random input state is close to a low entropy state (with entropy smaller than $\log|B| - c/|B|$). On one hand, we will show that for $|B|/|A| = o(1)$, this probability is upper bounded by $\exp(- c K |A|)$ (with $K > 0$ a constant), for a Haar random choice of the channel unitary. On the other hand, we will compute a lower bound on this probability, conditioned on the minimum output entropy of the channel being small; in this way we will get a lower bound of order $\exp( - \ln(2) |A|)$. Putting these two estimates together we obtain Lemma \ref{Hastings}. 

There are two key conceptual insights necessary to turn the idea of the previous paragraph into a proof. The first is to define an appropriate notion of closeness, when quantifying how close a state is to a low entropy one. For this, Hastings introduced the concept of a tube around a state \footnote{The name \textit{tube} is taken from Ref. \cite{FKM09}.}, which will take a central role in the proof of Lemma \ref{Hastings}.
\begin{definition}
We define the tube around $\sigma \in {\cal D}(\mathbb{C}^D)$ with width parameter $N > 0$ as
\begin{equation}
\text{TUBE}(\sigma, N) := \left \{ \pi \in {\cal D}(\mathbb{C}^D) : \exists \hspace{0.2 cm} \frac{1}{2} \leq p \leq 1 \hspace{0.3 cm} s.t. \hspace{0.3 cm} \left \Vert \pi - \left(p \sigma + (1 - p) \frac{\id}{D} \right) \right \Vert_{\infty} \leq \sqrt{\frac{\log(N)}{N}}  \right \}.
\end{equation}
\end{definition}
We will be interested in the probability that the output of a random input state, over a random choice of the channel, is in the tube of a low entropy state. The set of such states is formalized in the next definition.
\begin{definition}
For constants $N, c > 0$, we define the set of states in the \textit{tube} of a low entropy state as 
\begin{equation}
X_{D, N, c} := \left \{ \rho \in {\cal D}(\mathbb{C}^{D}) : \exists \hspace{0.2 cm} \sigma \in {\cal D}(\mathbb{C}^{D}) \hspace{0.2 cm} with \hspace{0.2 cm} \delta S(\sigma) \geq c/D \hspace{0.2 cm} s.t. \hspace{0.2 cm}\rho \in \text{TUBE}(\sigma, N) \hspace{0.2 cm}  \right\}.
\end{equation}
\end{definition}

The second insight is to consider the probability only of a particular subset of the set of states close to a low entropy state. We will look at the intersection of $X_{D, N, c}$ with the set of states of small operator norm. While this restriction will affect only very mildly the lower bound on the probability we are ultimately interesting in analyzing, it will allow us to get a much improved upper bound on it.
\begin{definition}
For a constant $a > 1$, we define the set of states with \textit{bounded} operator norm as
\begin{equation}
Y_{D, a} := \left \{ \rho \in {\cal D}(\mathbb{C}^{D})  : ||\rho||_{\infty} \leq \frac{a}{D} \right \}.
\end{equation}
\end{definition}

We are now in position to state precisely the two propositions which will be the focus of the remainder of the paper.  

Let $\ket{\chi} \in A$ be a Haar random state and ${\cal E}$ be a channel given by Eq. (\ref{definitionE}) with $U$ drawn from the Haar measure in $\mathbb{U}(|A| |B|)$. Then, for $|A| \geq |B|^2$, we have
\begin{proposition} \label{MainLemma1}
\begin{equation}
\Pr_{{\cal E}, \chi}\left( {\cal E}({\chi}) \in  X_{|B|, |A|, c} \cap Y_{|B|, a} \right) \leq \exp\left(-\frac{c|A|}{128a} + o(1)|A|\right).
\end{equation}
\end{proposition}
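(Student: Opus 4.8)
The plan is to bound the probability via a union bound / net argument combined with a concentration of measure estimate for Lipschitz functions on the sphere. First I would fix a candidate ``center'' state $\sigma \in {\cal D}(\mathbb{C}^{|B|})$ with $\delta S(\sigma) \geq c/|B|$, and ask: for a Haar random channel ${\cal E}$ and a Haar random input $\ket{\chi} \in A$, what is the probability that ${\cal E}(\chi)$ lands in $\mathrm{TUBE}(\sigma,|A|)$ and simultaneously has operator norm at most $a/|B|$? The key observation is that $\ket{\chi^A}$ fed through the isometry $V := U(\cdot \otimes \ket{0}^B)$ produces a Haar random pure state $\ket{\psi^{AB}} = V\ket{\chi}$ in $A\otimes B$, and ${\cal E}(\chi) = \psi^B = \tr_A \psi^{AB}$; so the whole problem reduces to: the marginal on $B$ of a Haar random pure state on $AB$ is, with overwhelming probability, NOT close to any low-entropy state. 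For a \emph{fixed} $\sigma$, I would express the event as a deviation of a well-chosen Lipschitz function $f(\psi) = \|\psi^B - (p\sigma + (1-p)\id/|B|)\|_\infty$ (or an $L_2$ surrogate that dominates it) below its expectation — since for $|A| \gg |B|$ the marginal $\psi^B$ concentrates around $\id/|B|$, which is far (in the relevant sense dictated by $c/|B|$ versus $\sqrt{\log|A|/|A|}$) from $\sigma$ unless $\sigma$ itself is near-maximally mixed, contradicting $\delta S(\sigma)\geq c/|B|$. Lévy's lemma on $\mathbb{S}^{2|A||B|-1}$ then gives a bound of the form $\exp(-\Omega(|A| \cdot t^2))$ where $t$ is the gap; tracking constants should yield the exponent $-c|A|/(128a)$, with the restriction to $Y_{|B|,a}$ (operator norm $\le a/|B|$) entering precisely to control the Lipschitz constant of $f$ restricted to the relevant region, which is where the factor $1/a$ comes from.

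Next I would handle the fact that $\sigma$ ranges over a continuum. The standard device is an $\varepsilon$-net: the set of density matrices on $\mathbb{C}^{|B|}$ admits an $\varepsilon$-net (in trace or operator norm) of cardinality $(C/\varepsilon)^{|B|^2}$. Choosing $\varepsilon$ polynomially small in $1/|A|$ — small enough that replacing $\sigma$ by a net point perturbs the tube condition by less than $\sqrt{\log|A|/|A|}$ after adjusting $p$ — the net has size $\exp(O(|B|^2 \log|A|))$. Since we are in the regime $|A| \geq |B|^2$, we have $|B|^2 \log|A| = o(|A|)$ (indeed $|B|^2 \le |A|$ and $\log|A| = o(|A|/|B|^2)$ once $|A|$ is large), so the union bound over the net costs only a factor $\exp(o(1)|A|)$, which is exactly the $o(1)|A|$ slack appearing in the statement. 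Combining the per-$\sigma$ concentration bound with the union bound over the net, and absorbing the net-approximation error into a harmless adjustment of constants, gives the claimed $\exp(-c|A|/(128a) + o(1)|A|)$.

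The main obstacle I expect is the concentration estimate itself: one must choose the right Lipschitz function and compute (or carefully lower-bound) its expectation. The naive function $\psi \mapsto \|\psi^B - \id/|B|\|_\infty$ has Lipschitz constant of order $1$ with respect to the Euclidean metric on the big sphere, which would give an exponent of order $|A||B| \cdot t^2$ with $t \sim \sqrt{\log|A|/|A|}$ — too weak, since $t^2 |A| \sim \log|A|$. The point of intersecting with $Y_{|B|,a}$ is that on states of bounded operator norm one gets a much better Lipschitz constant (roughly $\sqrt{a/|B|}$ in the appropriate norm), which is what converts the bound into $\exp(-\Omega(c|A|/a))$; making this rigorous — in particular showing that the bad event, intersected with $Y_{|B|,a}$, is contained in a sublevel set of a function with the improved Lipschitz bound, and that the expectation of that function is bounded below by $\Omega(c/|B|)$ using $\delta S(\sigma)\geq c/|B|$ together with an entropy-versus-distance (Pinsker-type or Fannes-type) inequality — is the technical heart of the argument. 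A secondary subtlety is the interplay of the two sources of randomness (${\cal E}$ and $\chi$): one should either integrate over $\chi$ first for fixed $U$ and then over $U$, or simply note that $V\ket{\chi}$ with both Haar random is Haar random on the sphere in $AB$, reducing everything to a single concentration inequality on one large sphere.
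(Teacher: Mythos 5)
Your proposal captures the essential ingredients the paper uses: the reduction from (random channel, random input) to the marginal $\psi^B$ of a Haar random bipartite pure state, the observation that $\delta S(\sigma)\geq c/|B|$ forces $\sigma$ far from $\id/|B|$ via an entropy-versus-distance inequality, and—crucially—the role of the restriction to $Y_{|B|,a}$ as a device to improve the Lipschitz constant of the relevant function by a factor $\sqrt{|B|/(4a)}$, which is exactly where the $1/a$ in the exponent comes from. These are the right ideas and they match the paper's.

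The structural difference is that you introduce an $\varepsilon$-net over the low-entropy centers $\sigma$ and a union bound, which the paper avoids entirely. The paper's observation is that the event is already uniform in $\sigma$: if $\psi^B\in\text{TUBE}(\sigma,|A|)$ for \emph{some} $\sigma$ with $\delta S(\sigma)\geq c/|B|$, then $\bigl\Vert\sigma-\id/|B|\bigr\Vert_2\geq\sqrt{c}/|B|$ (Lemma~\ref{entversus2norm}), and the tube condition with $p\geq 1/2$ then gives $\bigl\Vert\psi^B-\id/|B|\bigr\Vert_2\geq \sqrt{c}/(2|B|)-O\bigl(\sqrt{\log|A|/|A|}\bigr)$, independently of which $\sigma$ realized the membership in $X_{|B|,|A|,c}$. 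So the whole of $X_{|B|,|A|,c}\cap Y_{|B|,a}$ is contained in a single 2-norm deviation event, to which the concentration bound (Proposition~\ref{largedeviationbound}) applies directly. That is both simpler and tighter than a per-$\sigma$ bound plus union bound. Your net cost of $\exp\bigl(O(|B|^2\log|A|)\bigr)$ is indeed absorbable into the $o(1)|A|$ slack, but only because $|B|$ is held fixed as $|A|\to\infty$; the phrase "since we are in the regime $|A|\geq|B|^2$" is not by itself a justification (at $|A|=|B|^2$ one has $|B|^2\log|A|\sim |A|\log|A|$, which is \emph{not} $o(|A|)$), so the sentence should really appeal to $|B|$ fixed and $|A|\to\infty$. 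Finally, a small but consequential point: the natural function to run Lévy/concentration on is the 2-norm distance $g(\psi)=\bigl\Vert\psi^B-\id/|B|\bigr\Vert_2$ rather than the $\infty$-norm one you first write down, both because the entropy bound on $\sigma$ only controls a 2-norm distance cleanly, and because it is for $g$ that the restricted-Lipschitz improvement on $Y_{|B|,a}$ is proved (Lemma~\ref{improvedLipschitz}). Your "$L_2$ surrogate" parenthetical acknowledges this; in the paper it is not a surrogate but the primary object.
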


Moreover, for $\log |A| \geq 8|B|^8$ and $a \geq 15$,
\begin{proposition} \label{MainLemma2}
\begin{equation}
\Pr_{{\cal E}, \chi} \left( {\cal E}({\chi}) \in  X_{|B|, |A|, c} \cap Y_{|B|, a} \right) \geq \frac{1}{8 |A|}\exp(- \ln(2) |A|) \left(\Pr_{{\cal E}}\left( \delta S_{\min} \geq \frac{c}{|B|}  \right) - o(1) \right).
\end{equation}
\end{proposition}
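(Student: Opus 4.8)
The plan is to lower-bound the probability that ${\cal E}(\chi)$ lands in $X_{|B|,|A|,c}\cap Y_{|B|,a}$ by conditioning on the event that ${\cal E}$ itself has small minimum output entropy, i.e.\ $\delta S_{\min}({\cal E})\geq c/|B|$. Conditioned on this event, there exists (by definition of $S_{\min}$) an input state $\ket{\phi^\ast}\in A$ whose output $\sigma:={\cal E}(\phi^\ast)$ satisfies $\delta S(\sigma)\geq c/|B|$; this is precisely the $\sigma$ we will use to witness membership in $X_{|B|,|A|,c}$. The strategy is then to show that a Haar random input $\ket{\chi}$ has a non-negligible probability of producing an output ${\cal E}(\chi)$ that simultaneously (i) lies in $\text{TUBE}(\sigma,|A|)$ and (ii) has operator norm at most $a/|B|$, and that this probability is at least of order $\frac{1}{8|A|}\exp(-\ln(2)|A|)$.

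First I would handle the ``closeness to $\sigma$'' requirement. The key geometric fact is that if $\ket{\chi}$ has a large overlap $p=|\braket{\phi^\ast}{\chi}|^2\in[\tfrac12,1]$ with the optimal input, then ${\cal E}(\chi)$ is close to $p\sigma+(1-p)\tau$ for some residual state $\tau$; one then argues that, on average over the random channel $U$ (or over the remaining degrees of freedom of $\chi$ orthogonal to $\phi^\ast$), the residual piece is close to the maximally mixed state $\id/|B|$, up to fluctuations of size $\sqrt{\log(|A|)/|A|}$ in operator norm. This is where the hypothesis $\log|A|\geq 8|B|^8$ enters: it makes $\sqrt{\log(|A|)/|A|}$ small enough (and the relevant concentration bounds on the random isometry applied to the orthogonal complement sharp enough) that with constant probability the error stays below the tube width. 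The probability that a Haar random $\ket{\chi}$ on $A$ has $|\braket{\phi^\ast}{\chi}|^2\geq \tfrac12$ is $(1-\tfrac12)^{|A|-1}=2^{-(|A|-1)}$, which is the source of the $\exp(-\ln(2)|A|)$ factor; the $\frac{1}{8|A|}$ prefactor absorbs the polynomial losses from the concentration estimates and the probability that the bounded-norm condition (ii) also holds.

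Next I would deal with the bounded operator norm condition $\|{\cal E}(\chi)\|_\infty\leq a/|B|$. Here the point is that ${\cal E}(\chi)=p\sigma+(1-p)\tau+(\text{small error})$, and while $\sigma$ itself may have large operator norm, one only needs the \emph{typical} output to be well-conditioned; since $a\geq 15$, there is enough slack that a Haar random input, conditioned on the overlap being in $[\tfrac12,1]$, produces an output whose norm is controlled — using that $\tau$ concentrates around $\id/|B|$ and standard bounds on the largest Schmidt coefficient of a random bipartite state of the relevant dimensions. Combining: $\Pr_{{\cal E},\chi}(\cdots)\geq \Pr_{\cal E}(\delta S_{\min}\geq c/|B|)\cdot \Pr_{\chi}(\text{overlap}\geq\tfrac12)\cdot\Pr(\text{tube and norm conditions}\mid \text{overlap})$, and the last factor is shown to be at least $\frac{1}{8|A|}(1-o(1))$; rearranging and folding the $o(1)$ into the stated form gives the claim. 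The main obstacle I anticipate is step two: making the concentration argument for the residual state $\tau\approx\id/|B|$ rigorous and with the right quantitative dependence, so that the tube width $\sqrt{\log(|A|)/|A|}$ genuinely dominates the fluctuations under the hypothesis $\log|A|\geq 8|B|^8$ — in particular one must be careful that conditioning on a large overlap with $\phi^\ast$ does not destroy the Haar-randomness needed for concentration on the orthogonal complement, which is handled by noting that the conditional distribution is still rotationally invariant on that complement.
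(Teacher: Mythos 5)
Your proposal follows essentially the same route as the paper: condition on the channel being ``bad'' (large $\delta S_{\min}$), take the optimal input $\chi_{\cal E}$ with $\sigma_{\cal E}={\cal E}(\chi_{\cal E})$ as the witness for membership in $X_{|B|,|A|,c}$, lower bound the overlap $|\braket{\chi_{\cal E}}{\chi}|^2\geq\tfrac12$ to get the $\exp(-\ln(2)|A|)$ factor, observe that the orthogonal component $\ket{\phi}$ of $\ket{\chi}$ remains Haar-distributed (this is the paper's Lemma~\ref{independetHaarmeasure}), and use concentration to argue that the residual and cross terms in the expansion of ${\cal E}(\chi)$ stay below the tube width $\sqrt{\log|A|/|A|}$.

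That said, two of your steps are carried out differently in the paper, and the difference matters for rigor. First, you treat the bounded-norm condition $\|{\cal E}(\chi)\|_\infty\leq a/|B|$ by \emph{conditioning} on the large-overlap event and arguing the conditional output is well-conditioned. That conditioning skews the distribution of $\chi$, so it is not straightforward; the paper instead applies $\Pr(X\cap Y)\geq \Pr(X)-\Pr(Y^c)$ and then invokes the Harrow--Hayden--Leung operator-norm bound (Lemma~\ref{HHLLemma}) to show $\Pr(Y^c)\leq\exp(-|A|)$, which is negligible against the main $\exp(-\ln(2)|A|)/(8|A|)$ term. This removes the $Y$ event cleanly before any conditioning on overlaps takes place, and is what makes the combinatorics tractable. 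Second, you attribute the $\tfrac{1}{8|A|}$ prefactor to ``polynomial losses from the concentration estimates,'' but in the paper it comes entirely from the crude geometric lower bound on $\Pr(x\geq\tfrac12)$ via the polar cap area (Lemma~\ref{geometric}); the concentration estimates for the two ``bad'' events $F^c$ (cross term not small) and $G^c$ (residual not close to $\id/|B|$) enter only inside the $o(1)$ and are proved in Lemma~\ref{boundsmessy} using the conjugate-channel Kraus decomposition, Levy's lemma, and again the HHL bound. You correctly identify these concentration estimates as the main obstacle but leave them unproved; that is where the bulk of the technical work lives, and simply noting $\log|A|\geq 8|B|^8$ does not yet close the gap. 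Your exact formula $\Pr(x\geq\tfrac12)=(1/2)^{|A|-1}$ is actually sharper than the paper's geometric estimate, so if you justified it you could absorb the $1/(8|A|)$, but as stated it is asserted rather than derived.
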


Combining these two results we get Lemma \ref{Hastings} by choosing $c > 128 \ln(2) a$ and $a=15$. 

We will derive Proposition \ref{MainLemma1} from a new large deviation bound, which we believe might be of independent interest. It shows that with high probability the reduced state $\psi^A$ of a random bipartite state $\ket{\psi^{AB}}$ is close, in two norm, to the maximally mixed state. Although similar results are well-known (see e.g. \cite{HHL04, HLW06}), the restriction to reduced states $\psi^B$ with a small operator norm will allow us to sharpen the exponential bound essentially by a factor of $|B|$; this improvement turns out to be crucial in proving Proposition \ref{MainLemma1}. By a measure concentration argument we prove in section \ref{latgedeviationproof} the following
\begin{proposition} \label{largedeviationbound}
For $\ket{\psi^{AB}} \in A\otimes B$ drawn from the Haar measure, $|A| \geq |B|^2$ and $a \geq 3$, 
\begin{equation}
\Pr \left(\left \Vert \psi^B - \frac{\id}{|B|} \right \Vert_2 \geq  \varepsilon  \hspace{0.3 cm} \text{and} \hspace{0.3 cm}  \psi^B \in Y_{|B|, a}  \right) \leq 4 \exp\left(- \frac{|A| |B|^2  \left(\varepsilon - 2|A|^{-\frac{1}{2}}\right)^2}{64a} \right).
\end{equation}
\end{proposition}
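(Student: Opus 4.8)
The plan is to establish the bound via concentration of measure on the unit sphere, exploiting the standard identification of a Haar-random pure state $\ket{\psi^{AB}}$ with a uniformly random point on the sphere $\mathbb{S}^{2|A||B|-1}$. First I would introduce the function $f(\psi) := \Vert \psi^B - \id/|B| \Vert_2$ on pure states, and note that it is Lipschitz (with respect to the Euclidean metric on the underlying sphere) with a constant of order $1$: a small perturbation of $\ket{\psi^{AB}}$ in two-norm changes $\psi^{AB}$ by a comparable amount in trace norm, which in turn controls $\Vert \psi^B - (\psi')^B\Vert_2$. Levy's lemma for $\mathbb{S}^{n}$ then gives a Gaussian tail: $\Pr(f \geq \EE f + t) \leq 2\exp(-C n t^2)$ with $n \sim |A||B|$. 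To get the $|A||B|^2$ scaling in the exponent rather than $|A||B|$, the key point—and the reason the $Y_{|B|,a}$ restriction enters—is that when one additionally conditions on $\psi^B$ having operator norm at most $a/|B|$, the relevant Lipschitz constant of $f$ improves by a factor of order $\sqrt{|B|/a}$ (heuristically because $\Vert \psi^B - \id/|B|\Vert_2 \le \Vert \psi^B - \id/|B|\Vert_\infty^{1/2}\,\Vert \psi^B - \id/|B|\Vert_1^{1/2}$ and the operator-norm factor is now $O(\sqrt{a/|B|})$, so variations are damped). Thus on the event $\psi^B \in Y_{|B|,a}$ one effectively has a Lipschitz constant $L \sim \sqrt{a/|B|}$, and Levy's lemma produces $\Pr \leq 2\exp(-C |A||B| \cdot t^2 / L^2) = 2\exp(-C' |A||B|^2 t^2/a)$, which is the shape of the claimed bound.

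Second, I would control the mean $\EE f$. It is classical (e.g.\ from the first and second moments of the Wishart-type matrix $\psi^B$, or from \cite{HLW06}) that $\EE \Vert \psi^B - \id/|B|\Vert_2^2 = (|B|^2-1)/(|A||B|+1) \leq 1/|A|$, hence by Jensen $\EE f \leq |A|^{-1/2}$. I would then apply the Lipschitz concentration with $t = \varepsilon - 2|A|^{-1/2}$ (assuming $\varepsilon \geq 2|A|^{-1/2}$, else the bound is trivial since the right-hand side exceeds $1$ — actually one should check it still holds, but the statement is only interesting in that regime), absorbing the gap $\EE f \leq |A|^{-1/2}$ and an extra $|A|^{-1/2}$ slack coming from replacing the true mean by $\EE f$ in the conditional estimate. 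The factor $4$ (rather than $2$) and the extra slack term come precisely from handling the conditioning: rather than conditioning directly, I would bound $\Pr(f \geq \varepsilon \text{ and } \psi^B\in Y) $ by an unconditional probability for a modified, globally Lipschitz surrogate function $\tilde f$ that agrees with $f$ on the good event and has the improved Lipschitz constant everywhere (a standard truncation/extension trick — e.g.\ replace $\psi^B$ by its truncation to eigenvalues $\le a/|B|$ inside the two-norm, or use that the set $Y_{|B|,a}$ pulled back to the sphere has a nice metric structure). This is where the constants $64$ and $a\ge 3$ get pinned down.

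The main obstacle I expect is making the Lipschitz-constant improvement under the operator-norm restriction fully rigorous — in particular, defining the surrogate $\tilde f$ so that (i) it is genuinely $L$-Lipschitz on all of $\mathbb{S}^{2|A||B|-1}$ with $L = O(\sqrt{a/|B|})$, (ii) it coincides with $f$ on $\{\psi^B \in Y_{|B|,a}\}$ so that the event probability is faithfully captured, and (iii) its mean is still $O(|A|^{-1/2})$ (this last point is why one needs $|A| \ge |B|^2$, ensuring $\psi^B$ is typically close to $\id/|B|$ with operator norm close to $1/|B|$, so the truncation changes little in expectation). Everything else — Levy's lemma, the moment computation, and assembling the constants — is routine. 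I would carry the steps out in the order: (1) moment bound on $\EE f$; (2) construction and Lipschitz estimate of the surrogate $\tilde f$; (3) application of Levy's lemma on the sphere; (4) choice of $t$ and bookkeeping of the additive slacks to reach the stated inequality.
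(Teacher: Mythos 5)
The high-level strategy here is right -- concentration on the sphere plus an improved modulus of continuity for $g(\psi)=\Vert \psi^B-\id/|B|\Vert_2$ when $\psi^B$ has bounded operator norm -- and the computation of $\EE g\leq |A|^{-1/2}$ matches the paper's Lemma~\ref{median}. But there are two genuine gaps.

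First, your proposed mechanism for the Lipschitz improvement does not actually yield a Lipschitz bound. The H\"older interpolation $\Vert X\Vert_2\leq \Vert X\Vert_\infty^{1/2}\Vert X\Vert_1^{1/2}$, applied to $X=\psi^B-\phi^B$ with $\Vert\psi^B\Vert_\infty,\Vert\phi^B\Vert_\infty\leq a/|B|$, gives
$\Vert \psi^B-\phi^B\Vert_2 \lesssim \sqrt{a/|B|}\,\Vert\ket{\psi}-\ket{\phi}\Vert_2^{1/2}$,
i.e.\ H\"older-$\tfrac12$ continuity, not Lipschitz continuity; fed into a concentration inequality this does not produce the Gaussian tail $\exp(-c|A||B|^2\varepsilon^2/a)$. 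The paper's Lemma~\ref{improvedLipschitz} obtains an honest Lipschitz bound $|g(\psi)-g(\phi)|\leq\sqrt{4a/|B|}\,\Vert\ket{\psi}-\ket{\phi}\Vert_2$ by a different route: pinch to a common eigenbasis, factor $(p_k-q_k)^2=(\sqrt{p_k}-\sqrt{q_k})^2(\sqrt{p_k}+\sqrt{q_k})^2$, use $(\sqrt{p_k}+\sqrt{q_k})^2\leq 4a/|B|$, and then recognize $\sum(\sqrt{p_k}-\sqrt{q_k})^2=2-2F$ which is controlled, via monotonicity of fidelity under partial trace, by $\Vert\ket{\psi}-\ket{\phi}\Vert_2^2$. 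The purity of the global states and the fidelity route are essential, and your sketch does not contain them.

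Second, your plan to handle the conditioning by constructing a globally Lipschitz surrogate $\tilde f$ (a McShane-type extension off the good set) is not what's needed and is exactly where you flag an obstacle you don't resolve -- controlling $\EE\tilde f$. The paper sidesteps this entirely with the following observation, which is the key missing idea: the median set $M=\{g\leq m(g)\}$ \emph{automatically} lies inside $Y_{|B|,a}$. Indeed $m(g)\leq 2|A|^{-1/2}$ gives $\Vert\psi^B\Vert_2^2\leq 1/|B|+4/|A|$ on $M$, and by Lemma~\ref{entversusmaxeing1} (with $|A|\geq|B|^2$) this forces $\Vert\psi^B\Vert_\infty\leq 3/|B|\leq a/|B|$. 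Consequently Lemma~\ref{improvedLipschitz} is a two-point estimate valid whenever \emph{both} endpoints have bounded operator norm, so any state with $g\geq m(g)+\beta$ and $\psi^B\in Y_{|B|,a}$ is at distance $\geq\beta\sqrt{|B|/(4a)}$ from $M$; Theorem~\ref{measureceontration} applied to $M$ then finishes the proof with no extension, no surrogate, and no need to control a modified expectation. This structural observation -- the median set is already good -- together with the fidelity-based Lipschitz lemma, is what makes the argument close, and neither appears in your proposal.
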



\section{Proof of Proposition \ref{MainLemma1}} \label{proofmain1}

In this section we prove Proposition \ref{MainLemma1}. The idea is to combine Proposition \ref{largedeviationbound} and the following simple lemma relating the entropy deviation from its maximal value to the distance to the maximally mixed state.

\begin{lemma} \label{entversus2norm}
For every $\sigma \in {\cal D}(\mathbb{C}^D)$,
\begin{equation} \label{2normversusentropylemma2}
\left \Vert \sigma - \frac{\id}{D} \right \Vert_2^2 \geq \frac{\log(D) - S(\sigma)}{D}. 
\end{equation}
\end{lemma}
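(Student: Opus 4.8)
The plan is to reduce (\ref{2normversusentropylemma2}) to a statement about the eigenvalues of $\sigma$ and then to the elementary inequality $\log x \le x-1$. First I would diagonalize $\sigma$, writing its eigenvalues as $\lambda_1,\dots,\lambda_D \ge 0$ with $\sum_i \lambda_i = 1$. Both sides of the asserted bound depend only on the spectrum, and a short computation (using $\sum_i \lambda_i = 1$ to cancel the cross term in the square) gives
\begin{equation}
\left\Vert \sigma - \frac{\id}{D} \right\Vert_2^2 = \sum_i \left( \lambda_i - \frac{1}{D} \right)^2 = \sum_i \lambda_i^2 - \frac{1}{D},
\end{equation}
and
\begin{equation}
\log(D) - S(\sigma) = \log(D) + \sum_i \lambda_i \log \lambda_i = \sum_i \lambda_i \log(D\lambda_i).
\end{equation}

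Multiplying the asserted bound through by $D$, it becomes equivalent to $\sum_i \lambda_i\big[(D\lambda_i - 1) - \log(D\lambda_i)\big] \ge 0$. I would then conclude termwise: each $\lambda_i \ge 0$, each bracket is non-negative by the elementary bound $\log x \le x-1$ for $x>0$ (here $\log$ denoting the natural logarithm) applied with $x = D\lambda_i$, and the indices with $\lambda_i = 0$ contribute nothing under the convention $0\log 0 = 0$. A sum of non-negative terms is non-negative, which is precisely the lemma; equality holds exactly when every $\lambda_i = 1/D$, i.e.\ $\sigma = \id/D$.

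There is no real obstacle here; the only points deserving a word of care are the use of the normalization $\sum_i \lambda_i = 1$ in the two algebraic identities and the handling of vanishing eigenvalues. For orientation, $\log(D) - S(\sigma) = \delta S(\sigma)$ is just the relative entropy of $\sigma$ with respect to $\id/D$, so the lemma is a ``reverse Pinsker''-type bound adapted to the maximally mixed reference state; it is used in Section~\ref{proofmain1} in the form ``large entropy deficit $\Rightarrow$ large Hilbert--Schmidt distance from $\id/|B|$'', which, combined with the concentration statement of Proposition~\ref{largedeviationbound}, forces the output of a random channel on a random input to lie outside the tube of a low-entropy state with overwhelming probability.
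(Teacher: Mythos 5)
Your proof is correct and amounts to essentially the same argument as the paper's: both hinge on the identity $\Vert\sigma-\id/D\Vert_2^2=\tr(\sigma^2)-1/D$ together with the elementary bound $\log x\le x-1$ (natural log). The only organizational difference is that the paper first bounds $S(\sigma)$ below by the R\'enyi-2 entropy $-\log\tr(\sigma^2)$ via Jensen's inequality and then applies $\log x\le x-1$ once to $x=D\tr(\sigma^2)$, whereas you apply $\log x\le x-1$ termwise to $x=D\lambda_i$ and sum with weights $\lambda_i$; unpacking either derivation yields the same final inequality $S(\sigma)\ge 1-D\tr(\sigma^2)+\log D$.
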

\begin{proof}
We have
\begin{eqnarray} \label{2normversusentropylemma1}
S(\rho) &\geq& - \log(\tr(\rho^2)) \nonumber \\
&=& - \log(D\tr(\rho^2)) + \log(D) \nonumber \\
&\geq& 1 - D\tr(\rho^2) + \log(D),
\end{eqnarray}
where the first inequality follows from the concavity of the $\log$ and the second from the relation $\log(x) \leq x - 1$, valid for $x \geq 1$. Rearranging 
terms in Eq. (\ref{2normversusentropylemma1}), we find Eq. (\ref{2normversusentropylemma2}).
\end{proof}



\vspace{0.8 cm}

\begin{proof} (Proposition \ref{MainLemma1})

Let $\ket{\psi^{AB}}$ be such that $\psi^B \in X_{|B|, |A|, c}$. Then there is a $\sigma$ with $\delta S(\sigma) \geq c/ |B|$ such that $\psi^B \in \text{TUBE}(\sigma, |A|)$. From Lemma \ref{entversus2norm} we get 
\begin{equation} \label{prooflemma1main1}
\left \Vert \sigma - \frac{\id}{|B|} \right \Vert_2 \geq \frac{\sqrt{c}}{|B|}.
\end{equation}
As $|| \psi^B - (p \sigma + (1-p)\id/|B|)||_{\infty} \leq \sqrt{\frac{\log|A|}{|A|}}$, with $p \geq 1/2$, Eq. (\ref{prooflemma1main1}) gives
\begin{equation}
\left \Vert \psi^B - \frac{\id}{|B|}   \right \Vert_2 \geq \frac{\sqrt{c}}{2|B|} - \sqrt{\frac{\log |A|}{|A|}}. 
\end{equation}
where we used $||\cdot||_2\leq ||\cdot||_\infty$.

A moment of thought reveals that the distribution of ${\cal E}(\chi)$, for random ${\cal E}$ and $\chi$, is the same as the distribution of the reduced density matrix $\psi^B$ of a random bipartite state $\ket{\psi}^{AB} \in A\otimes B$. Therefore, from the argument of the previous paragraph
\begin{eqnarray}
\Pr_{{\cal E}, \chi} \left( {\cal E}({\chi}) \in  X_{|B|, |A|, c} \cap Y_{|B|, a} \right) &=& \Pr_{\psi}\left( \psi^B \in  X_{|B|, |A|, c} \cap Y_{|B|, a} \right)  \\ &\leq& \Pr_{\psi}\left(\left \Vert \psi^B - \frac{\id}{|B|}   \right \Vert_2 \geq \frac{\sqrt{c}}{2|B|} - \sqrt{\frac{\log |A|}{|A|}} \hspace{0.3 cm} \text{and} \hspace{0.3 cm}  \psi^B \in Y_{|B|, a} \right). \nonumber
\end{eqnarray}
The result now follows from Proposition \ref{largedeviationbound}.
\end{proof}

\section{Proof of Proposition \ref{MainLemma2}} \label{proofmainlenmma2}

On general lines, the idea of the lower bound given by Proposition \ref{MainLemma2} is the following. Let $P$ be the probability that a random channel has minimum output entropy bigger than $\log |B| - c/|B|$. For a given channel ${\cal E}$, let $\chi_{\cal E}$ be a pure input state to ${\cal E}$ with minimum output entropy, i.e. a state which satisfies $S({\cal E}(\chi_{\cal E})) = S_{\min}({\cal E})$. We will show that with probability larger than $\Omega(\exp( -\ln(2) |A|))$, ${\cal E}(\chi)$ is in the tube of ${\cal E}(\chi_{\cal E})$, for a random choice of the input state $\ket{\chi}$. From this we can conclude that ${\cal E}(\chi)$ is in the tube of a low entropy state with probability bigger than $(1 - P) \Omega(\exp( -\ln(2) |A|))$. 

This is almost all there is to show, except that from the argument of the previous paragraph, we have no guarantee that the states ${\cal E}(\chi)$ which we have proven to be in the tube of a low entropy state also belong to $Y_{|B|, a}$. To overcome this difficulty, we employ a large deviation bound due to Harrow, Hayden, and Leung \cite{HHL04} (Lemma \ref{HHLLemma} in Appendix \ref{HHLLemmasection}) which shows that with probability bigger than $1 - \exp(-|A|)$, ${\cal E}(\chi)$ belongs $Y_{|B|, a}$. This lemma thus allows us to disregard states not in $Y_{|B|, a}$ for sufficiently large $|A|$.   



\vspace{1 cm}

\begin{proof}
(Proposition \ref{MainLemma2})

The first step in the proof is to eliminate the event ${\cal E}(\chi) \in Y_{|B|, a}$. For this, we first use Lemma \ref{intersectionversusdiferenceinprob} of Appendix \ref{probabilityfacts} to get
\begin{eqnarray} \label{interbydifmain} 
\Pr_{{\cal E}, \chi} \left( {\cal E}({\chi}) \in  X_{|B|, |A|, c} \cap Y_{|B|, a} \right) &\geq& \Pr_{{\cal E}, \chi} \left( {\cal E}({\chi}) \in  X_{|B|, |A|, c}  \right) - \Pr_{{\cal E}, \chi} \left( {\cal E}({\chi}) \notin  Y_{|B|, a} \right).
\end{eqnarray}
Then, from Lemma \ref{HHLLemma} of Apendix \ref{HHLLemmasection},  
\begin{equation} \label{boundonNOTY}
\Pr_{{\cal E}, \chi} \left( {\cal E}({\chi}) \notin  Y_{|B|, a} \right) \leq \left( \frac{10 |B|}{a - 1}\right)^{2 |B|} \exp \left(- |A| \frac{(a - 1) - \log(a)}{14 \ln(2)}   \right) \leq \exp \left( -|A|\right),
\end{equation}
for $a \geq 15$ and $|A| \geq 2|B| \ln(2|B|)$. 

In the remainder of the proof we show that
\begin{equation} \label{boundonX}
\Pr_{{\cal E}, \chi} \left( {\cal E}({\chi}) \in  X_{|B|, |A|, c}  \right) \geq \frac{1}{8|A|}\exp(- \ln(2) |A|) \left(\Pr_{{\cal E}}\left( \delta S_{\min} \geq \frac{c}{|B|}  \right) - o(1) \right).
\end{equation}
The result then follows from Eqs. (\ref{interbydifmain}), (\ref{boundonNOTY}), and (\ref{boundonX}).

Let us define $\sigma_{\cal E} := {\cal E}(\chi_{\cal E})$, with $\chi_{\cal E}$ an input to ${\cal E}$ with minimum output entropy, i.e. a state such that $S({\cal E}(\chi_{\cal E})) = S_{\min}({\cal E})$. From the definition of the set $X_{|B|, |A|, c}$ we find
\begin{eqnarray} \label{relsigmaalmostoptimalprooflemma2}
\Pr_{{\cal E}, \chi} \left( {\cal E}({\chi}) \in  X_{|B|, |A|, c} \right) &\geq&  \Pr_{{\cal E}, \chi} \left( {\cal E}({\chi}) \in \text{TUBE}(\sigma_{\cal E}, |A|) \hspace{0.2 cm} \text{and} \hspace{0.2 cm} \delta S_{\min}({\cal E}) \geq \frac{c}{|B|}  \right).
\end{eqnarray}

We now proceed to bound the right-hand-side of Eq. (\ref{relsigmaalmostoptimalprooflemma2}). Following \cite{FKM09}, 
\begin{eqnarray} \label{expttrick} 
\Pr_{{\cal E}, \chi} \left( {\cal E}({\chi}) \in  \text{TUBE}(\sigma_{\cal E}, |A|) \hspace{0.2 cm} \text{and} \hspace{0.2 cm} \delta S_{\min}({\cal E}) \geq \frac{c}{|B|}  \right) &=& \nonumber \\ \mathbb{E}_{\cal E} \left( \textbf{1}\left(\delta S_{\min}({\cal E}) \geq \frac{c}{|B|}\right) \Pr_{\chi}\left( {\cal E}({\chi}) \in  \text{TUBE}({\cal E}(\sigma_{\cal E}, |A|) \right)\right), &&  
\end{eqnarray}
where $\textbf{1}(\delta S_{\min}({\cal E}) \geq c/|B|))$ is the indicator function of the event (only over channels):  $\{ \delta S_{\min}({\cal E}) \geq c/|B| \}$.  

Let us consider the probability over states inside the expectation value in Eq. (\ref{expttrick}). For a Haar random $\ket{\chi} \in A$, we can write
\begin{equation}
\ket{\chi} = \sqrt{x} \ket{\chi_{\cal E}} + \sqrt{1 - x} \ket{\phi},
\end{equation}
where $x = |\braket{\chi_{\cal E}}{\chi}|^2$ and $\ket{\phi}$ is a state orthogonal to $\ket{\chi_{\cal E}}$. In Lemma \ref{independetHaarmeasure} of Appendix \ref{probabilityfacts} we prove that $x$ and $\ket{\phi}$ are independent random variables and that $\ket{\phi}$ is distributed accordingly to the Haar measure in the subspace of $A$ orthogonal to $\ket{\psi}$. Therefore, 
\begin{eqnarray} \label{proboverchi}
\Pr_{\chi}\left( {\cal E}({\chi}) \in  \text{TUBE}({\cal E}(\sigma_{{\cal E}}), |A|) \right) \geq \Pr \left( x \geq 1/2 \right) \Pr_{\phi} \left( F \cap G \right),
\end{eqnarray}
where 
\begin{equation}
F := \left \{ \Vert {\cal E}(\ket{\chi_{\cal E}}\bra{\phi}) \Vert_{\infty} \leq \frac{1}{4}\sqrt{\frac{\log|A|}{|A|}} \right \}, \hspace{0.3 cm} G := \left \{ \left \Vert {\cal E}(\ket{\phi}\bra{\phi}) - \frac{\id}{|B|} \right \Vert_{\infty} \leq \frac{1}{2}\sqrt{\frac{\log|A|}{|A|}} \right \}.
\end{equation}
Indeed, note that if $x \geq 1/2$ and $F, G$ hold true
\begin{eqnarray}
\left \Vert {\cal E}(\chi) - x {\cal E}({\chi_{\cal E}}) - (1 - x)\frac{\id}{|B|} \right \Vert_{\infty} &\leq& 2 \Vert {\cal E}(\ket{\chi_{\cal E}}\bra{\phi}) \Vert_{\infty} + \left \Vert {\cal E}(\ket{\phi}\bra{\phi}) - \frac{\id}{|B|} \right \Vert_{\infty} \leq \sqrt{\frac{\log |A|}{|A|}},   
\end{eqnarray}
which implies ${\cal E}({\chi}) \in  \text{TUBE}({\cal E}(\sigma_{{\cal E}}), |A|)$. 

In Lemma \ref{geometric} we use a simple geometric argument to show 
\begin{equation}
\Pr\left( |\braket{\chi_{\cal E}}{\chi}|^2 \geq \frac{1}{2} \right) \geq \frac{1}{8|A|}\exp \left( - \ln(2) |A| \right).
\end{equation}
Then, from Eqs. (\ref{expttrick}) and (\ref{proboverchi})
\begin{eqnarray}
&&\Pr_{{\cal E}, \chi} \left( {\cal E}({\chi}) \in  \text{TUBE}({\cal E}(\sigma_{{\cal E}}), |A|) \hspace{0.2 cm} \text{and} \hspace{0.2 cm} \delta S_{\min}({\cal E}) \geq \frac{c}{|B|}  \right)  \nonumber \\ &\geq& \frac{1}{8|A|} \exp \left( - \ln(2) |A| \right) \mathbb{E}_{\cal E} \left( \textbf{1}\left(\delta S_{\min}({\cal E}) \geq \frac{c}{|B|}\right) \Pr_{\chi}\left( F \cap G  \right)\right),  \nonumber \\ &=& \frac{1}{8|A|} \exp \left( - \ln(2) |A| \right) \Pr_{{\cal E}, \chi} \left( F \cap G \cap  \left( \delta S_{\min}({\cal E}) \geq \frac{c}{|B|} \right) \right).  
\end{eqnarray}

From Lemma \ref{intersectionversusdiferenceinprob} we can bound the second term in the last line of the equation above as
\begin{eqnarray}
\Pr_{{\cal E}, \chi} \left( F \cap G \cap  \left( \delta S_{\min}({\cal E}) \geq \frac{c}{|B|} \right) \right) \geq  \Pr_{{\cal E}, \chi} \left( \delta S_{\min}({\cal E}) \geq \frac{c}{|B|} \right) - \Pr_{{\cal E}, \chi}\left(F^c \right) - \Pr_{{\cal E}, \chi}\left( G^c \right).  
\end{eqnarray}
Eq. (\ref{boundonX}) now follows from Lemma \ref{boundsmessy}, where we prove that $\Pr_{{\cal E}, \chi}(F^c), \Pr_{{\cal E}, \chi}(G^c) = o(1)$, asymptotically in $|A|$. 
\end{proof}

\begin{lemma} \label{geometric}
Let $\ket{\psi} \in A$ be a fixed state and $\ket{\chi} \in A$ be drawn from the Haar measure. Then, 
\begin{equation}
\Pr\left( |\braket{\psi}{\chi}|^2 \geq \frac{1}{2} \right) \geq 
\frac{1}{8|A|} \exp\left(-|A|\ln 2\right)
\end{equation}
\end{lemma}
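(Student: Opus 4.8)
The plan is to discard the complex phase, reduce the claim to a lower bound on the fraction of a Euclidean sphere occupied by a spherical cap, and then estimate that fraction by an elementary one-dimensional integral. Identifying $\mathbb{C}^{|A|}$ with $\mathbb{R}^{2|A|}$, a Haar-random $\ket{\chi}$ becomes a uniformly distributed point $v$ on the sphere $\mathbb{S}^{2|A|-1}$, while $\mathrm{Re}\,\braket{\psi}{\chi}$ becomes $\langle w,v\rangle$ for the fixed real unit vector $w$ representing $\ket{\psi}$; by transitivity of the orthogonal group on $\mathbb{S}^{2|A|-1}$ we may take $w=e_1$. Since $|\braket{\psi}{\chi}|^2 \geq (\mathrm{Re}\,\braket{\psi}{\chi})^2$,
\[
\Pr\!\left(|\braket{\psi}{\chi}|^2 \geq \tfrac12\right) \;\geq\; \Pr\!\left(\mathrm{Re}\,\braket{\psi}{\chi} \geq \tfrac{1}{\sqrt2}\right) \;=\; \mu\big(C\big), \qquad C := \big\{\, v \in \mathbb{S}^{2|A|-1} : \langle e_1,v\rangle \geq \tfrac{1}{\sqrt2} \,\big\},
\]
so it suffices to show $\mu(C) \geq \tfrac{1}{8|A|}2^{-|A|}$; the case $|A|=1$ is trivial since then $|\braket{\psi}{\chi}|^2 = 1$.

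For this I would use the standard fact that a single coordinate $v_1$ of a uniform point on $\mathbb{S}^{m-1}$ has density proportional to $(1-s^2)^{(m-3)/2}$ on $[-1,1]$, so that with $m=2|A|$,
\[
\mu(C) \;=\; \frac{\displaystyle\int_{1/\sqrt2}^{1}(1-s^2)^{(m-3)/2}\,ds}{\displaystyle\int_{-1}^{1}(1-s^2)^{(m-3)/2}\,ds}.
\]
Here the denominator is at most $2$ (the integrand is $\leq 1$ once $m\geq 3$), and, substituting $u=s^2$ and using $1/\sqrt u \geq 1$ on $[\tfrac12,1]$, the numerator is at least $\tfrac12\int_{1/2}^{1}(1-u)^{(m-3)/2}\,du = 2^{-(m-1)/2}/(m-1)$. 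Hence $\mu(C) \geq 2^{-|A|}/\big(\sqrt2\,(2|A|-1)\big)$, which is $\geq \tfrac{1}{8|A|}2^{-|A|}$ because $8|A| \geq \sqrt2\,(2|A|-1)$ for every $|A|\geq 1$; this finishes the argument. If one prefers a genuinely geometric estimate of $\mu(C)$ in place of the coordinate-density formula, one can instead note that orthogonal projection onto $e_1^\perp$ is non-area-increasing and maps $C$ onto the Euclidean ball of radius $\sin(\pi/4)=1/\sqrt2$ in $\mathbb{R}^{2|A|-1}$, so $\mu(C)$ is at least the ratio of that ball's volume to the surface area of $\mathbb{S}^{2|A|-1}$, which carries the same exponential rate.

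I do not expect a real obstacle here; the only thing to watch is not to be wasteful when bounding $\mu(C)$. The naive route of forcing $v_1$ above an absolute constant while constraining the remaining coordinates to have norm $O(1)$ throws away a factor of order $\Gamma(|A|)$ and badly undershoots the target, so one must keep the full cap integral, which already has the exact exponential rate $2^{-|A|}$; the polynomial prefactor $\tfrac{1}{8|A|}$ then leaves ample room for the crude estimates above.
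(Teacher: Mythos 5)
Your proof is correct and takes essentially the same route as the paper: identify $\ket{\chi}$ with a uniform point on $\mathbb{S}^{2|A|-1}$, drop part of the overlap to reduce the event to a polar cap of angular radius $\pi/4$, and lower-bound the cap's normalized measure to get the rate $2^{-|A|}$ with a $1/|A|$ prefactor. The only difference is cosmetic: the paper bounds the cap area from below by the volume of its orthogonal projection (a ball) and invokes closed-form sphere/ball volume formulas, whereas you integrate the one-coordinate marginal density $(1-s^2)^{(m-3)/2}$ directly via an elementary substitution — and you note the paper's projected-ball variant yourself as an alternative.
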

\begin{proof}

The vectors $\ket{\chi}$ can be seen as points $(x_1, \ldots, x_n)$
on real unit sphere $\mathbb{S}^{n-1}$ with $n = 2|A|$. The Haar measure is thus 
the normalized area of the sphere and the condition 
$|\braket{\psi}{\chi}|^2 \geq 1/2$ reads as  $x_1^2+x_2^2 \geq 1/2$.

Clearly $\Pr(x_1^2+x_2^2 \geq 1/2)$ is lower bounded by $\Pr(x_1^2\geq 1/2)$, 
which equals to the ratio of the area of a polar cap determined by the condition 
$x_1^2 \geq 1/2$ and the volume of the sphere. The area of the cap is in turn lower 
bounded by the volume of an $(n-1)$-dimensional ball given by the condition $x_2^2 + \ldots +x_n^2\leq 1/2$
(the projection of the cap onto a subspace perpendicular to the $x_1$ axis). Invoking explicit 
formulas for the volume of a ball and the area of a sphere (see e.g. \cite{MS86}), we obtain 
\be
\Pr(|\braket{\psi}{\chi}|^2 \geq 1/2) \geq \frac{1}{n \pi (\sqrt2)^{n-1}}
\geq \frac {1}{8|A|} e^{-\ln(2)|A|}.
\ee
\end{proof}

\begin{lemma} \label{boundsmessy}
Let $\ket{\psi}$ be a fixed state in $A$, $\ket{\phi}$ be drawn from the Haar measure in the subspace of $A$ orthogonal to $\ket{\psi}$ and ${\cal E}$ be a channel as in Eq. (\ref{definitionE}), with $U$ drawn from the Haar measure in $\mathbb{U}(|A||B|)$. Define 
\begin{equation}
F := \left \{ \Vert {\cal E}(\ket{\psi}\bra{\phi}) \Vert_{\infty} \leq \frac{1}{4}\sqrt{\frac{\log|A|}{|A|}} \right \}, \hspace{0.3 cm} G := \left \{ \left \Vert {\cal E}(\ket{\phi}\bra{\phi}) - \frac{\id}{|B|} \right \Vert_{\infty} \leq \frac{1}{2}\sqrt{\frac{\log|A|}{|A|}} \right \},
\end{equation}
Then, for $\log |A| \geq 8 |B|^8$ there are constants $C_1, C_2 > 0$ such that
\begin{equation} \label{boundsonFandG}
\Pr_{{\cal E}, \phi}\left( F \right) \geq 1 - \exp \left( - \frac{C_1 \log |A|}{|B|^8} \right), \hspace{0.2 cm} \Pr_{{\cal E}, \phi}\left( G \right) \geq 1 - \exp \left( - C_2 \log |A| \right).
\end{equation}
\end{lemma}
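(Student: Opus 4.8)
The plan is to handle the two bounds by a common recipe: reduce each operator-norm event to a union, over a fine net of the unit sphere of $B$, of one-dimensional events, and then control each of those by concentration of measure (L\'evy's lemma) for a Lipschitz function on a sphere. For the reduction, write $M:={\cal E}(\ket{\psi}\bra{\phi})$ and $M':={\cal E}(\ket{\phi}\bra{\phi})-\id/|B|$, and set $\ket{\Psi}:=U(\ket{\psi}\otimes\ket{0}^{B})$, $\ket{\Phi}:=U(\ket{\phi}\otimes\ket{0}^{B})$ in $A\otimes B$; then $M=\tr_{A}(\ket{\Psi}\bra{\Phi})$, ${\cal E}(\ket{\phi}\bra{\phi})=\Phi^{B}$, and for $\ket{u},\ket{v}\in B$ a one-line computation gives
\[
\bra{u}M\ket{v}=\bra{\Phi}\big(\id_{A}\otimes\ket{v}\bra{u}\big)\ket{\Psi},\qquad
\bra{v}M'\ket{v}=\bra{\Phi}\big(\id_{A}\otimes\ket{v}\bra{v}\big)\ket{\Phi}-\tfrac{1}{|B|}.
\]
The point of this rewriting is that $\id_{A}\otimes\ket{v}\bra{u}$ and $\id_{A}\otimes\ket{v}\bra{v}$ have spectral norm $1$; the quadratic/bilinear forms that the channel induces directly on the input space $A$ have spectral norm of order $|A|/|B|$, for which L\'evy's lemma would be useless. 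A $\tfrac14$-net ${\cal N}$ of the unit sphere of $B$ has $|{\cal N}|\le 9^{2|B|}$, and $\|M\|_{\infty}\le 2\max_{u,v\in{\cal N}}|\bra{u}M\ket{v}|$, $\|M'\|_{\infty}\le 2\max_{v\in{\cal N}}|\bra{v}M'\ket{v}|$.

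For $F$: fix $U$ (hence $\ket{\Psi}$) and view $\ket{\Phi}$ as Haar on the unit sphere of the subspace $U(\{\ket{\psi}\}^{\perp}\otimes\ket{0}^{B})$, of complex dimension $|A|-1$ (real dimension $2|A|-3$). For fixed net points, $\ket{\Phi}\mapsto\bra{u}M\ket{v}$ is the inner product of $\ket{\Phi}$ with the fixed vector $(\id_{A}\otimes\ket{v}\bra{u})\ket{\Psi}$ of norm $\le1$, hence has mean zero and Lipschitz constant $\le1$; applying L\'evy's lemma to its real and imaginary parts yields $\Pr_{\phi}(|\bra{u}M\ket{v}|\ge t\mid U)\le 4\exp(-\Omega(|A|t^{2}))$, uniformly in $U$, in $\ket{\psi}$ and in the net points. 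Taking $t$ a small constant multiple of $\sqrt{\log|A|/|A|}$ and union-bounding over the $\le 9^{4|B|}$ pairs in ${\cal N}\times{\cal N}$, the hypothesis $\log|A|\ge 8|B|^{8}$ makes $9^{4|B|}=\exp(O(|B|))$ negligible against $\exp(-\Omega(\log|A|))$, so $\Pr_{{\cal E},\phi}(F^{c})\le\exp(-\Omega(\log|A|))$; since the estimate is uniform in $\ket{\psi}$ it still applies when $\ket{\psi}=\ket{\chi_{\cal E}}$, and it is stronger than the bound claimed.

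For $G$: now $\ket{\Phi}$ is Haar on the sphere of a subspace that depends on $U$ through $\ket{\psi}$ (which in the application is $\ket{\chi_{\cal E}}$), so $\ket{\Phi}$ is not Haar on the full sphere of $A\otimes B$, and the conditional mean of $\bra{v}\Phi^{B}\ket{v}$ can sit far from $1/|B|$ for near-degenerate channels. To recover a genuine Haar vector, let $x$ be an independent $\mathrm{Beta}(1,|A|-1)$ variable and put $\ket{\chi}:=\sqrt{x}\ket{\psi}+\sqrt{1-x}\ket{\phi}$, which by Lemma \ref{independetHaarmeasure} is Haar in $A$ and independent of $U$, so $\ket{\Xi}:=U(\ket{\chi}\otimes\ket{0}^{B})$ is Haar on the full sphere of $A\otimes B$ and ${\cal E}(\ket{\chi}\bra{\chi})=\Xi^{B}$. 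On the event $\{x\le K\log|A|/|A|\}$ with $K$ a small absolute constant, trace-norm contractivity of ${\cal E}$ gives $\|{\cal E}(\ket{\phi}\bra{\phi})-\Xi^{B}\|_{\infty}\le\|\ket{\phi}\bra{\phi}-\ket{\chi}\bra{\chi}\|_{1}=2\sqrt{1-|\braket{\phi}{\chi}|^{2}}=2\sqrt{x}\le 2\sqrt{K}\sqrt{\log|A|/|A|}$, so $G^{c}$ forces $\|\Xi^{B}-\id/|B|\|_{\infty}\ge(\tfrac12-2\sqrt{K})\sqrt{\log|A|/|A|}$; while $\Pr(x> K\log|A|/|A|)=(1-K\log|A|/|A|)^{|A|-1}\le |A|^{-K/2}$ for $|A|$ large. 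The surviving term concerns the reduced state of a Haar bipartite state, which I would bound either by Proposition \ref{largedeviationbound} together with the estimate of Lemma \ref{HHLLemma} on $\Pr(\Xi^{B}\notin Y_{|B|,a})$, using $\|\cdot\|_{\infty}\le\|\cdot\|_{2}$, or by the same net-plus-L\'evy argument as above, now on the full sphere (real dimension $2|A||B|-1$) with the quadratic form $\id_{A}\otimes\ket{v}\bra{v}$, mean $1/|B|$, Lipschitz constant $\le2$; either way one gets $\le\exp(-\Omega(\log|A|))$. Altogether $\Pr_{{\cal E},\phi}(G^{c})\le\exp(-\Omega(\log|A|))$.

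The step I expect to be the main obstacle is precisely the $G$-subtlety: noticing that the random input is confined to a channel-dependent hyperplane rather than being an unconstrained Haar vector, and that this genuinely has to be dealt with — it is what forces the coupling to $\ket{\chi}$ together with the cheap tail bound $\Pr(x>K\log|A|/|A|)\le|A|^{-K/2}$, which costs only a $\poly(1/|A|)$ factor, still of the form $\exp(-\Omega(\log|A|))$. Once that and the spectral-norm-one reformulation are in place, the remaining ingredients — the net cardinality, the passage from the net maximum to the operator norm, and the L\'evy estimates for linear and quadratic forms — are routine bookkeeping.
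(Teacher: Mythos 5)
Your proposal is correct. The $G$ half is essentially the paper's own argument: couple $\ket{\phi}$ to a genuinely Haar state $\ket{\chi}=\sqrt{x}\ket{\psi}+\sqrt{1-x}\ket{\phi}$ via Lemma~\ref{independetHaarmeasure}, bound the perturbation by $\Vert\phi-\chi\Vert_1=2\sqrt{x}$, control $x$ by a tail estimate (you use the exact Beta$(1,|A|-1)$ law, the paper applies Lemma~\ref{levysapplication} to $|\braket{\psi}{\chi}|$ --- equivalent up to constants), and finish with the Harrow--Hayden--Leung bound (Lemma~\ref{HHLLemma}) on $\Vert\Xi^B-\id/|B|\Vert_\infty$. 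Your observation that the subspace constraint on $\ket{\phi}$ is what forces the coupling is exactly right and matches the paper's reasoning.

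The $F$ half is where you depart from the paper. The paper passes through the Kraus operators of the complementary channel, writing ${\cal E}(\rho)=\sum_{k,k'}\tr(A_{k'}^{\dagger}A_k\rho)\ket{k}\bra{k'}$ and bounding $\Vert{\cal E}(\ket{\psi}\bra{\phi})\Vert_\infty$ by a polynomial-in-$|B|$ multiple of $\max_{k,k'}|\bra{\phi}A_{k'}^{\dagger}A_k\ket{\psi}|$; Levy's lemma then has to absorb that polynomial factor, which is precisely the source of the $|B|^{8}$ loss in the exponent. You instead lift to $\ket{\Psi},\ket{\Phi}\in A\otimes B$ via the Stinespring isometry, note that $\bra{u}M\ket{v}=\bra{\Phi}(\id_A\otimes\ket{v}\bra{u})\ket{\Psi}$ is a linear functional of $\ket{\Phi}$ with a vector of operator-norm one, apply Levy on the $(|A|-1)$-dimensional subspace, and union-bound over a $\tfrac14$-net of $\mathbb{S}(B)$ of size $9^{2|B|}$. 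This trades the multiplicative $|B|^{O(1)}$ loss for an additive $O(|B|)$ loss in the exponent, giving $\Pr(F^c)\le\exp(-\Omega(\log|A|))$ --- strictly stronger than the paper's $\exp(-C_1\log|A|/|B|^8)$ --- and makes the hypothesis $\log|A|\ge 8|B|^8$ far more than necessary for this step (any $\log|A|\gg|B|$ would do). Both approaches exploit the same uniformity in the fixed vector (respectively in the Kraus products $A_{k'}^{\dagger}A_k\ket{\psi}$ and in the net points and $\ket{\Psi}$), which is what lets the bound transfer to the application where $\ket{\psi}=\ket{\chi_{\cal E}}$ depends on $U$; you note this explicitly, the paper leaves it implicit.

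One small quibble: you write $\Pr(x>K\log|A|/|A|)\le|A|^{-K/2}$, but if $\log$ is base two then $(1-t)^{|A|-1}\le e^{-t(|A|-1)}$ gives exponent $K\ln|A|\cdot(1-1/|A|)/\ln 2$, so the power is roughly $K/\ln 2$, not $K/2$; immaterial for the conclusion, which only needs $\exp(-\Omega(\log|A|))$.
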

\begin{proof}
Let us start with the bound on the probability of $F^c$. Consider the complementary channel of ${\cal E}$, defined by ${\cal E}^c(\rho) := \tr_{B}\left(U\left(\rho^A \otimes \ket{0}\bra{0}^B \right)U^{\cal y}\right)$. Noting that ${\cal E}^c$ is a channel with input and output dimension $|A|$ and environment dimension $|B|$, we can write
\begin{equation}
{\cal E}^c(\rho) = \sum_{k=1}^{|B|^2} A_{k} \rho A_{k}^{\cal y},
\end{equation}
for Kraus operators $A_{k}$ such that $\sum_k A^{\cal y}_kA_k = \id$. Thus
\begin{equation}
{\cal E}(\rho) = \sum_{k=1}^{|B|^2} \sum_{k'=1}^{|B|^2} \tr(A_{k'}^{\cal y}A_k \rho) \ket{k}\bra{k'},
\end{equation}
from which we find
\begin{equation}
\left \Vert {\cal E}(\ket{\psi}\bra{\phi}) \right \Vert_{\infty} \leq |B|^4 \max_{k, k'} |\bra{\phi} A_{k'}^{\cal y}A_k \ket{\psi}|.
\end{equation}

Let $k_{\max}, k'_{\max}$ be the optimal indices in the equation above and define $\ket{\theta} := A_{k_{\max}'}^{\cal y}A_{k_{\max}} \ket{\psi} / \Vert A_{k_{\max}'}^{\cal y}A_{k_{\max}} \ket{\psi} \Vert^{1/2}$. As $||A_{k}||_{\infty} \leq 1$ for all $k$, we get $\Vert A_{k_{\max}'}^{\cal y}A_{k_{\max}} \ket{\psi} \Vert \leq 1$ and hence 
\begin{equation}
\left \Vert {\cal E}(\ket{\psi}\bra{\phi}) \right \Vert_{\infty} \leq |B|^4 |\braket{\theta}{\phi}|.
\end{equation}
We thus have
\begin{equation}
\Pr_{\phi}\left( F^c \right)\leq \Pr_{\phi} \left(  |\braket{\theta}{\phi}| \geq \frac{1}{4|B|^4}\sqrt{\frac{\log |A|}{|A|}} \right).
\end{equation}

Applying Lemma \ref{levysapplication} to the equation above we find
\begin{equation}
\Pr_{\phi} \left(  |\braket{\theta}{\phi}| \geq \frac{1}{4|B|^4}\sqrt{\frac{\log |A|}{|A|}} \right) \leq 2 \exp \left( - \frac{K \log |A|}{|B|^8} \right),
\end{equation}
for $\log |A| \geq 8|B|^8$ and a constant $K > 0$. This gives the bound on $\Pr(F)$ given in Eq. (\ref{boundsonFandG}).

Let us now turn to the bound on the probability of $G$. From Lemma \ref{independetHaarmeasure} of Appendix \ref{probabilityfacts}, we can select $\ket{\phi}$ by drawing $\ket{\chi} \in A$ from the Haar measure and setting $\ket{\chi} = \sqrt{x} \ket{\psi} + \sqrt{1 - x} \ket{\phi}$. Then we have
\begin{eqnarray}
\left \Vert {\cal E}(\phi) - \frac{\id}{|B|} \right \Vert_{\infty} &\leq& \left \Vert {\cal E}(\phi) - {\cal E}(\chi) \right \Vert_{\infty} + \left \Vert {\cal E}(\chi) - \frac{\id}{|B|} \right \Vert_{\infty} \nonumber \\ &\leq&  \left \Vert \phi - \chi \right \Vert_{1} + \left \Vert {\cal E}(\chi) - \frac{\id}{|B|} \right \Vert_{\infty},
\end{eqnarray}
where the first inequality follows from the triangle inequality and the second from the fact that $\Vert X \Vert_{\infty} \leq \Vert X \Vert_1$ and the monotonicity of the trace norm under trace preserving CP maps. Therefore, 
\begin{equation}
\Pr_{{\cal E}, \phi} \left( G \right) \geq \Pr_{{\cal E}, \chi} \left( \left \Vert \phi - \chi \right \Vert_{1} \leq \frac{1}{4}\sqrt{\frac{\log |A|}{|A|}} \hspace{0.2 cm} \text{and} \hspace{0.2 cm} \left \Vert {\cal E}(\chi) - \frac{\id}{|B|} \right \Vert_{\infty} \leq \frac{1}{4}\sqrt{\frac{\log |A|}{|A|}}  \right).
\end{equation}
From Lemma \ref{intersectionversusdiferenceinprob} of Appendix \ref{probabilityfacts}, in turn,  
\begin{equation} \label{lasteqofthislemma}
\Pr_{{\cal E}, \phi} \left( G \right) \geq 1 - \Pr_{{\cal E}, \chi} \left( \left \Vert \phi - \chi \right \Vert_{1} \geq \frac{1}{4}\sqrt{\frac{\log |A|}{|A|}}  \right) - \Pr_{{\cal E}, \chi} \left( \left \Vert {\cal E}(\chi) - \frac{\id}{|B|} \right \Vert_{\infty} \geq \frac{1}{4}\sqrt{\frac{\log |A|}{|A|}}  \right).
\end{equation}

One one hand, we have $\left \Vert \phi - \chi \right \Vert_{1} \leq \sqrt{2 - 2 |\braket{\phi}{\chi}|^2} = \sqrt{2x(2 - x)} \leq 2 \sqrt{x} = 2 |\braket{\psi}{\chi}|$. Following \cite{FKM09}, we find that if we replace $\ket{\phi}$ by $\ket{\chi}$, then with high probability it will only incur in a small error. Indeed, from Lemma \ref{levysapplication}
\begin{equation}
\Pr_{{\cal E}, \chi} \left( \left \Vert \phi - \chi \right \Vert_{1} \geq \frac{1}{4}\sqrt{\frac{\log |A|}{|A|}}  \right) \leq 2 \exp \left( - K \log |A| \right),
\end{equation}
for a constant $K > 0$. 

On the other hand, from Lemma \ref{HHLLemma} of section \ref{HHLLemmasection},
\begin{equation}
\Pr_{{\cal E}, \chi} \left( \left \Vert {\cal E}(\chi) - \frac{\id}{|B|} \right \Vert_{\infty} \geq \frac{1}{4}\sqrt{\frac{\log |A|}{|A|}}  \right) \leq \exp \left( - \frac{\log|A|}{560 \ln(2)} \right).
\end{equation}
Combining these two last equations with Eq. (\ref{lasteqofthislemma}), we find the lower bound on $\Pr(G)$ given in Eq. (\ref{boundsonFandG}).
\end{proof}

\begin{lemma} \label{levysapplication}
Let $S \subseteq A$ be a $|S|$-dimensional subspace of $A$ and let $P_S$ be the projector onto $S$. For $\ket{\phi} \in S$ drawn from the Haar measure in $S$ and a fixed $\ket{\theta} \in A$,
\begin{equation} 
\Pr_{\phi} \left(  |\braket{\theta}{\phi}| \geq \frac{1}{\sqrt{|S|}} + \varepsilon \right) \leq 4 \exp \left( - \frac{|S| \varepsilon^2}{16} \right),
\end{equation}
\end{lemma}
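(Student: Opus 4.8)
The plan is to view $|\braket{\theta}{\phi}|$ as a Lipschitz function of the random vector $\ket{\phi}$ on the unit sphere of $S$ and apply Lévy's lemma (concentration of measure on the sphere), controlling the mean (or median) by an explicit computation. Concretely, I would parametrize the Haar-random $\ket{\phi} \in S$ as a point $y$ on the real sphere $\mathbb{S}^{2|S|-1}$, and consider the function $f(y) := |\braket{\theta}{\phi(y)}| = \Vert P_S \ket{\theta}\braket{\theta}{\phi}\Vert$-type expression; more simply $f(\ket\phi) = |\braket{\theta}{\phi}|$. This $f$ is $1$-Lipschitz with respect to the Euclidean metric on the sphere, since $\bigl| |\braket{\theta}{\phi}| - |\braket{\theta}{\phi'}| \bigr| \leq |\braket{\theta}{\phi-\phi'}| \leq \Vert\,\ket\phi - \ket{\phi'}\,\Vert$ by Cauchy--Schwarz and $\Vert\ket\theta\Vert = 1$.

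The key steps, in order: (i) bound the expectation $\mathbb{E}_\phi |\braket{\theta}{\phi}|$. Since $\mathbb{E}_\phi |\braket{\theta}{\phi}|^2 = \Vert P_S\ket\theta\Vert^2/|S| \leq 1/|S|$ (the reduced-state/symmetry computation: averaging $\ket\phi\bra\phi$ over the Haar measure on $S$ gives $P_S/|S|$), Jensen gives $\mathbb{E}_\phi |\braket{\theta}{\phi}| \leq 1/\sqrt{|S|}$. (ii) Apply Lévy's lemma on $\mathbb{S}^{2|S|-1}$ to the $1$-Lipschitz function $f$: for a sphere of dimension $m$, $\Pr(f \geq \mathbb{E} f + \varepsilon) \leq 2\exp(-(m+1)\varepsilon^2/(2L^2))$ up to the usual constants; with $m+1 = 2|S|$ and $L = 1$ this gives a bound of the form $C\exp(-c|S|\varepsilon^2)$. (iii) Combine (i) and (ii): $\Pr(|\braket{\theta}{\phi}| \geq |S|^{-1/2} + \varepsilon) \leq \Pr(f \geq \mathbb{E} f + \varepsilon) \leq 4\exp(-|S|\varepsilon^2/16)$, after fixing the constant in Lévy's lemma to whatever explicit form is being cited (the factor $4$ and the $16$ in the denominator absorb the dimension-doubling from $\mathbb{C}$ to $\mathbb{R}$ and the constant in the concentration inequality).

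The main obstacle is bookkeeping rather than conceptual: one must be careful that Lévy's lemma is applied with the correct dimension (the real dimension $2|S|$, not $|S|$) and with a version whose constants actually yield the stated $4\exp(-|S|\varepsilon^2/16)$; different sources state Lévy's lemma with constants like $\exp(-(m-1)\varepsilon^2/2)$ or $2\exp(-m\varepsilon^2/9)$, so I would pin down one reference (e.g. the version in \cite{MS86} or \cite{HLW06}) and track the constant through the dimension substitution. A secondary point is to confirm that passing from the median to the mean (if the cited Lévy's lemma is stated about the median) costs only a further $O(|S|^{-1/2})$ shift that is harmlessly absorbed; using the mean-version directly avoids this. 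No deep idea is needed — the content is entirely (a) the second-moment computation bounding $\mathbb{E} f$ and (b) a black-box application of concentration on the sphere to a manifestly $1$-Lipschitz function.
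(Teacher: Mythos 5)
Your proposal is correct and follows essentially the same path as the paper's own proof: set $f(\ket\phi) = |\braket{\theta}{\phi}|$, bound $\mathbb{E} f$ via $\mathbb{E} f^2 = \bra\theta (P_S/|S|) \ket\theta \leq 1/|S|$ and Jensen, observe $f$ is $1$-Lipschitz, and invoke the mean-version of Lévy's lemma stated in the paper (Lemma on $\mathbb{S}^n$ with constant $4\exp(-(n+1)\alpha^2/16\eta^2)$), with $n+1 = 2|S|$ after the complex-to-real identification. Your worries about median-vs-mean and constant tracking are non-issues here since the paper's Lévy lemma is already stated about the mean with the exact constants needed.
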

\begin{proof}
We prove the lemma by applying Levy's lemma, given in Lemma \ref{LevysLemma} of section \ref{latgedeviationproof}, with $f(\ket{\phi}) := |\braket{\theta}{\phi}|$. On one hand, we have 
\begin{equation}
\mathbb{E}\left( f(\ket{\phi})^2\right) = \bra{\theta}\left( \frac{P_S}{|S|} \right)\ket{\theta} \leq \frac{1}{|S|}. 
\end{equation}
Then, from the convexity of $x^2$, $\mathbb{E}\left( f(\ket{\phi}) \right)^2 \leq \mathbb{E}\left( f(\ket{\phi})^2\right) \leq |S|^{-1}$. On the other hand, the Lipschitz constant of $f$ is easily seen to be unity. The result then follows easily from Lemma \ref{LevysLemma}.
\end{proof}

\textbf{Remark:} We note that in the proof of Proposition \ref{MainLemma2} we set the input dimension $|A|$ to be exponentially larger than the 
output dimension $|B|$; this is due to the factor of $\log |A|/|A|$ in the definition of the tube. We could have instead defined 
the width of the tube as $f(|A|)/|A|$ for any function $f$ sublinear in $|A|$. In this way we can get a much better dependence
of the input dimension $|A|$ with the output dimension $|B|$. Besides that, as in Hastings' original proof, we have used equal input and environment dimensions.
However, our approach allow us to consider the general case in essentially the same fashion. In pricinple, this could lead to a better scaling of the
minimal dimensions for which counterexamples can be shown to exist.

\section{proof of Proposition \ref{largedeviationbound}} \label{latgedeviationproof}


Let $\mathbb{S}^{n} := \{ x \in \mathbb{R}^{n+1} : ||x||_2 = 1  \}$ denote the Euclidean sphere in $\mathbb{R}^{n+1}$ and $\mu$ denote the normalized rotationally invariant measure in $\mathbb{S}^{n}$ (the Haar measure). Our strategy to prove Proposition \ref{largedeviationbound} is to explore the measure concentration phenomenon in high dimensional spheres \cite{MS86, Led01}. For a subset $A \subset \mathbb{S}^{n}$, define the $\varepsilon$-neighborhood of $A$ as
\begin{equation}
A_{\varepsilon} := \{ y \in \mathbb{S}^{n} : \exists \hspace{0.2 cm} x \in A \hspace{0.2 cm} s.t. \hspace{0.2 cm} || x - y ||_2 \leq \varepsilon \}.
\end{equation}

\begin{theorem} \label{measureceontration}
(Concentration of Measure in $\mathbb{S}^{n}$ \cite{MS86, Led01}) Let $A \subset \mathbb{S}^{n}$ and $0 \leq \epsilon \leq 1$. If $\mu(A) \geq 1/2$, then $\mu(A_{\varepsilon}) \geq 1 - 4 \exp\left( - \frac{(n+1) \epsilon^2}{16}  \right)$.
\end{theorem}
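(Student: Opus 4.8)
(Proposal)

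The plan is to deduce this from the spherical isoperimetric inequality of L\'evy--Schmidt, which is the genuinely nontrivial ingredient and which I would simply cite from \cite{MS86, Led01}: among all measurable subsets of $\mathbb{S}^{n}$ of a prescribed measure, a spherical cap has the smallest $\varepsilon$-neighborhood. Everything else is an elementary reduction plus a cap-measure estimate.

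First I would reduce to a hemisphere. Given $A$ with $\mu(A) \geq 1/2$, choose a measurable $B \subseteq A$ with $\mu(B) = 1/2$; then $A_{\varepsilon} \supseteq B_{\varepsilon}$, and applying the isoperimetric inequality to $B$ gives $\mu(A_{\varepsilon}) \geq \mu(B_{\varepsilon}) \geq \mu(H_{\varepsilon})$, where $H = \{x \in \mathbb{S}^{n} : \langle x, e \rangle \geq 0\}$ is a hemisphere (the cap of measure $1/2$) for some unit vector $e$. So it suffices to lower bound $\mu(H_{\varepsilon})$, equivalently to upper bound $\mu(H_{\varepsilon}^{c})$.

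Second I would identify $H_{\varepsilon}$ explicitly. A point $y$ with $\langle y, e \rangle = -t$, $0 < t \leq 1$, lies in $H_{\varepsilon}$ iff its Euclidean distance to the equator $\{x : \langle x, e \rangle = 0\}$ is at most $\varepsilon$; a one-line optimization (maximize $\langle x, y\rangle$ over unit $x$ with $\langle x,e\rangle = 0$) gives that distance as $\sqrt{2 - 2\sqrt{1-t^{2}}}$, so $y \in H_{\varepsilon}$ iff $t \leq \varepsilon\sqrt{1 - \varepsilon^{2}/4}$. Hence $H_{\varepsilon}^{c} = \{y \in \mathbb{S}^{n} : \langle y, e \rangle < -r\}$ is itself a spherical cap with $r := \varepsilon\sqrt{1 - \varepsilon^{2}/4}$, and for $0 \leq \varepsilon \leq 1$ one has $r^{2} \geq \tfrac{3}{4}\varepsilon^{2}$.

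Finally I would bound the measure of this cap by a standard estimate of the form $\mu(\{x \in \mathbb{S}^{n} : \langle x, e \rangle \geq r\}) \leq 2\exp(-(n+1)r^{2}/2)$, which follows by comparing the cap with its orthogonal projection onto the equatorial hyperplane (or by a Gaussian-integral comparison); I would include the short derivation or cite \cite{MS86, Led01}. Combining the last two steps yields $\mu(H_{\varepsilon}^{c}) \leq 2\exp\!\left(-\tfrac{3}{8}(n+1)\varepsilon^{2}\right) \leq 4\exp\!\left(-\tfrac{(n+1)\varepsilon^{2}}{16}\right)$, which is the claim. The isoperimetric comparison is black-boxed, so the only points requiring care are the passage between Euclidean $\varepsilon$-neighborhoods and geodesic caps in the third step and the bookkeeping of the (deliberately non-optimal) numerical constants.
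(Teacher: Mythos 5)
The paper does not actually prove Theorem~II --- it is stated as an imported result from \cite{MS86, Led01} and used as a black box, so there is no in-paper argument for you to be compared against. Your sketch is the standard textbook derivation and it checks out: the reduction to a hemisphere via a subset $B \subseteq A$ with $\mu(B) = 1/2$ is legitimate (non-atomicity of $\mu$ is automatic for $n \geq 1$, and $n = 0$ is trivial); the computation that $H_{\varepsilon}^{c}$ is the cap $\{y : \langle y, e\rangle < -r\}$ with $r = \varepsilon\sqrt{1 - \varepsilon^{2}/4}$ is correct, as is the bound $r^{2} \geq \tfrac{3}{4}\varepsilon^{2}$ for $0 \leq \varepsilon \leq 1$; and the cap-measure estimate $\mu(\{x : \langle x, e\rangle \geq r\}) \leq 2\exp(-(n+1)r^{2}/2)$ (or the sharper version without the factor $2$, found e.g.\ in Matou\v{s}ek or Ball) then gives
\[
\mu(H_{\varepsilon}^{c}) \leq 2\exp\left(-\tfrac{3}{8}(n+1)\varepsilon^{2}\right) \leq 4\exp\left(-\tfrac{(n+1)\varepsilon^{2}}{16}\right),
\]
a valid, deliberately loose, weakening. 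The two ingredients you explicitly black-box --- the L\'evy isoperimetric inequality on the sphere and the cap-measure estimate --- are precisely the content the paper itself defers to the cited references, so your argument supplies a correct and somewhat more explicit route to the same statement rather than diverging from it.
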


This theorem says that the area of $\mathbb{S}^n$ is sharply concentrated around any set with measure bigger than $1/2$. A simple but very powerful corollary of Theorem \ref{measureceontration} says that slowly varying functions on $\mathbb{S}^{n}$ attain a value very close to its average almost everywhere (see e.g. \cite{HLW06} for applications to quantum information theory). This is the content of Levy's Lemma. 

\begin{lemma} \label{LevysLemma} 
(Levy's Lemma \cite{MS86, Led01}) Let $f : \mathbb{S}^{n} \rightarrow \mathbb{R}$ be a function with Lipschitz constant $\eta$ and a point $x \in \mathbb{S}^{n}$ be chosen uniformly at random. Then  
\begin{equation}
\Pr \left( \left | f(x) - \mathbb{E}f \right | \geq \alpha  \right) \leq 4 \exp \left( - \frac{(n + 1)\alpha^2}{16\eta^2} \right).
\end{equation}
\end{lemma}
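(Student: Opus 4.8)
The plan is to derive Levy's lemma from the concentration-of-measure statement of Theorem~\ref{measureceontration} by the standard median argument. First I would introduce a \emph{median} $m$ of $f$, i.e.\ a real number such that $\mu(\{x : f(x) \le m\}) \ge 1/2$ and $\mu(\{x : f(x) \ge m\}) \ge 1/2$; such an $m$ exists because $t \mapsto \mu(\{f \le t\})$ is nondecreasing and right-continuous with limits $0$ and $1$, so one may take $m := \inf\{t : \mu(\{f\le t\}) \ge 1/2\}$.

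The core step is to apply Theorem~\ref{measureceontration} to $A^- := \{x : f(x) \le m\}$, which has measure at least $1/2$. Because $f$ is $\eta$-Lipschitz, the $\varepsilon$-neighborhood satisfies $A^-_\varepsilon \subseteq \{x : f(x) \le m + \eta\varepsilon\}$: if $\|x-y\|_2 \le \varepsilon$ with $y \in A^-$ then $f(x) \le f(y) + \eta\|x-y\|_2 \le m + \eta\varepsilon$. Hence $\mu(\{f > m + \eta\varepsilon\}) \le 1 - \mu(A^-_\varepsilon) \le 4\exp(-(n+1)\varepsilon^2/16)$, and running the same argument with $A^+ := \{x : f(x) \ge m\}$ gives $\mu(\{f < m - \eta\varepsilon\}) \le 4\exp(-(n+1)\varepsilon^2/16)$. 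Setting $\alpha = \eta\varepsilon$ yields the two one-sided tail bounds around the median.

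It remains to recenter from the median to the mean. Integrating the tail bound just obtained gives $|\mathbb{E}f - m| \le \mathbb{E}|f - m| = \int_0^\infty \Pr(|f - m| \ge t)\,dt = O(\eta/\sqrt{n+1})$, which is of strictly smaller order than the scale $\alpha \sim \eta/\sqrt{n+1}$ at which the asserted bound first becomes nontrivial; shifting the center by this amount and combining the two one-sided estimates into a single two-sided one therefore costs only a change of constants. I do not anticipate any conceptual difficulty here — the one mildly delicate point is the bookkeeping of numerical constants, since a naive union of the two one-sided bounds gives prefactor $8$ rather than $4$ and the median--mean shift perturbs the constant $16$ in the exponent; to land exactly on the stated inequality one tightens these estimates slightly (or, as is customary in this literature, notes that in the regimes where the lemma is invoked such lower-order corrections are immaterial, and in the complementary small-$\alpha$ range the bound is trivial since its right-hand side exceeds $1$).
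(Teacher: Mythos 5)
The paper does not prove this lemma; it is quoted verbatim from the cited references \cite{MS86, Led01} and used as a black box, so there is no internal proof to compare against. That said, deriving it from Theorem~\ref{measureceontration} via the median is exactly the standard route, and your outline correctly captures the two essential steps: the Lipschitz condition turns the $\varepsilon$-neighbourhood of a sublevel set $\{f \le m\}$ into a sublevel set $\{f \le m + \eta\varepsilon\}$, and one then trades the median for the mean.

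Where the argument does not hold up is the closing claim that the stated numerical constants can be recovered. Taking Theorem~\ref{measureceontration} at face value, each one-sided median bound carries a prefactor of $4$, so the two-sided median bound has prefactor $8$; there is no way to halve this without strengthening the concentration input. The shift from median to mean then strictly degrades the exponent: with $\delta := |\mathbb{E}f - m| = \Theta(\eta/\sqrt{n+1})$ one gets $\Pr(|f-\mathbb{E}f|\ge\alpha) \le 8\exp\bigl(-(n+1)(\alpha-\delta)^2/(16\eta^2)\bigr)$, and $8\exp\bigl(-(n+1)(\alpha-\delta)^2/(16\eta^2)\bigr) \le 4\exp\bigl(-(n+1)\alpha^2/(16\eta^2)\bigr)$ fails for every $\alpha > 0$. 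Your fallback that the troublesome range is the one where the right-hand side exceeds $1$ is also not quite right: the bound becomes nontrivial already at $\alpha \approx 4.7\,\eta/\sqrt{n+1}$, whereas $\delta$ can be as large as roughly $7\,\eta/\sqrt{n+1}$, leaving a genuine window uncovered. In short, what your argument actually proves from Theorem~\ref{measureceontration} is Levy's lemma with weaker explicit constants (which would suffice for every use of Lemma~\ref{LevysLemma} in the paper), not the inequality exactly as stated; the constants $4$ and $16$ in the statement come from the cited sources, where the underlying isoperimetric input is sharper than the version quoted in Theorem~\ref{measureceontration}.
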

Given a Haar distributed state $\ket{\psi} \in A$, we can see it as an Haar distributed point in $\mathbb{S}^{2|A|-1}$. Therefore the lemma above applies to Haar pure states as well.  

The proof of Proposition \ref{largedeviationbound} will follow closely the standard argument for deriving Levy's Lemma (see e.g. \cite{MS86, Led01}). An important difference is that we are only interested in establishing a large deviation bound for a particular subset of the state space, namely for states $\ket{\psi^{AB}}$ whose the reduced state $\psi^B$ has operator norm bounded by $a/|B|$. Such a restriction will allow us to use an improved bound on the Lipschitz constant of the function $g(\ket{\psi^{AB}}) := \left \Vert \psi^B - \frac{\id}{|B|} \right \Vert_2$ and sharpen the exponential bound appearing in Levy's Lemma by a factor of $|B|/(4a)$.  

\vspace{0.2 cm}

\begin{proof} 
(Proposition \ref{largedeviationbound}) 
Define
\begin{equation}
g(\ket{\psi^{AB}}) := \left \Vert \psi^B - \frac{\id}{|B|} \right \Vert_2.
\end{equation}
Note that $g$ is a function from $\mathbb{S}^{2|A||B|-1}$ to $\mathbb{R}$. Let $m(g)$ be the median of $g$ and set $M := \{ \ket{\psi^{AB}} : g(\ket{\psi^{AB}}) \leq m(g) \}$. In Lemma \ref{median} we show $m(g) \leq 2 |A|^{-\frac{1}{2}}$. Thus for every $\ket{\psi^{AB}} \in M$, we have 
\begin{equation}
\Vert \psi^B \Vert_2^2 \leq \frac{1}{|B|} + m(g)^2 \leq \frac{1}{|B|} + \frac{4}{|A|}. 
\end{equation}
An application of Lemma \ref{entversusmaxeing1} of Appendix \ref{operatornormversusentropy} with $\lambda = a/|B|$ then gives the following bound on the operator norm of states in $M$,
\begin{equation}
\Vert \psi^B \Vert_{\infty} \leq \frac{3}{|B|} \leq \frac{a}{|B|},
\end{equation}
for every $\psi^{AB} \in M$ and $|A| \geq |B|^2$ and $a \geq 3$.

Consider a state $\ket{\psi^{AB}}$ such that 
\begin{equation} \label{eqproofmainfssdfdsf}
g(\ket{\psi^{AB}}) \geq m(g) + \beta \hspace{0.2 cm} \text{and} \hspace{0.2 cm} || \psi^B ||_{\infty} \leq a / |B|.
\end{equation}
Because of the bound on the operator norm of $\psi^{B}$, we can use Lemma \ref{improvedLipschitz} to find from the first inequality of Eq. (\ref{eqproofmainfssdfdsf}) that 
$\psi^{AB}$ must be at least $\beta \sqrt{\frac{|B|}{4a}}$ away from $M$. Furthermore, by definition of the median, $\mu(M) \geq 1/2$. Therefore from Theorem \ref{measureceontration}
\begin{equation}
\Pr \left(\left \Vert \psi^B - \frac{\id}{|B|} \right \Vert_2 \geq \varepsilon  \hspace{0.3 cm} \text{and} \hspace{0.3 cm}  \psi^B \in Y_{|B|, a}  \right) \leq 
1 - \mu \left( A_{(\epsilon - m(g)) \sqrt{|B|/4a}}   \right) \leq \exp\left(- \frac{|A| |B|^2  (\varepsilon - m(g))^2}{64a} \right),
\end{equation}
and we are done.
\end{proof}

The next lemma shows that for states with operator norm bounded by $a/B$, the Lipschitz constant of the function $g$ is improved by a factor of $\sqrt{|B|/(4a)}$.

\begin{lemma} \label{improvedLipschitz}
Let $\ket{\psi^{AB}}, \ket{\phi^{AB}} \in A \otimes B$ be such that $||\psi^B||_{\infty}, ||\phi^B||_{\infty} \leq a/|B|$. Then
\begin{equation}
\left | \left \Vert \psi^B - \frac{\id}{|B|} \right \Vert_2 - \left \Vert \phi^B - \frac{\id}{|B|} \right \Vert_2 \right | \leq \sqrt{\frac{4a}{|B|}} || \ket{\psi^{AB}} - \ket{\phi^{AB}} ||_2. 
\end{equation}
\end{lemma}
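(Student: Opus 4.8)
The key inequality to establish is that, under the operator-norm constraint $\|\psi^B\|_\infty,\|\phi^B\|_\infty\le a/|B|$, the map $\ket{\psi^{AB}}\mapsto \psi^B$ is $\sqrt{4a/|B|}$-Lipschitz as a map from the unit sphere (with Euclidean norm) to Hermitian matrices with the Hilbert--Schmidt norm; the triangle inequality for $\|\cdot\|_2$ then transfers this to the function $g$, since subtracting the fixed matrix $\id/|B|$ from both arguments does not change the difference. So the plan is: first reduce to bounding $\|\psi^B-\phi^B\|_2$ in terms of $\|\ket{\psi^{AB}}-\ket{\phi^{AB}}\|_2$ via the reverse triangle inequality $\big|\,\|\psi^B-\id/|B|\|_2-\|\phi^B-\id/|B|\|_2\,\big|\le\|\psi^B-\phi^B\|_2$.

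Next I would estimate $\|\psi^B-\phi^B\|_2$ directly. Write $\psi^B-\phi^B=\tr_A(\ket{\psi^{AB}}\bra{\psi^{AB}}-\ket{\phi^{AB}}\bra{\phi^{AB}})$ and telescope: $\ket{\psi}\bra{\psi}-\ket{\phi}\bra{\phi}=\ket{\psi}\bra{\psi-\phi}+\ket{\psi-\phi}\bra{\phi}$. Taking partial trace and then the Hilbert--Schmidt norm, using the triangle inequality, it suffices to bound terms of the form $\|\tr_A(\ket{\psi^{AB}}\bra{\xi^{AB}})\|_2$ where $\ket{\xi^{AB}}=\ket{\psi^{AB}}-\ket{\phi^{AB}}$ (and the symmetric term with $\phi$). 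The standard trick here is a Cauchy--Schwarz-type bound: for any bipartite vectors, $\|\tr_A(\ket{u}\bra{v})\|_2 \le \|u^B\|_\infty^{1/2}\,\|v\|_2$ — more precisely one writes $\tr_A(\ket{u}\bra{v})$ in an operator-Schmidt-like form and controls one factor by $\sqrt{\|u^B\|_\infty}$ and the other by $\|\ket{v}\|_2$. Concretely, fixing an orthonormal basis $\{\ket{i}^B\}$ and writing $\ket{u}=\sum_i\ket{u_i}^A\ket{i}^B$, $\ket{v}=\sum_j\ket{v_j}^A\ket{j}^B$, the $(j,i)$ entry of $\tr_A(\ket{u}\bra{v})$ is $\braket{v_j}{u_i}$, whose Hilbert--Schmidt norm squared is $\sum_{i,j}|\braket{v_j}{u_i}|^2\le (\sum_j\|v_j\|^2)(\max_i\|u_i\|^2)\cdot(\text{something})$ — one has to be a little careful, but the clean route is $\sum_{i,j}|\braket{v_j}{u_i}|^2 = \sum_j\bra{v_j}\big(\sum_i\ket{u_i}\bra{u_i}\big)\ket{v_j}\le \big\|\sum_i\ket{u_i}\bra{u_i}\big\|_\infty\sum_j\|v_j\|^2 = \|u^A\|_\infty\,\|\ket{v}\|_2^2$, and then relate $\|u^A\|_\infty$ to $\|u^B\|_\infty$ (they have the same nonzero spectrum, so $\|u^A\|_\infty=\|u^B\|_\infty$). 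Hence $\|\tr_A(\ket{u}\bra{v})\|_2\le \sqrt{\|u^B\|_\infty}\,\|\ket{v}\|_2$.

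Applying this to the two telescoped terms gives $\|\psi^B-\phi^B\|_2\le (\sqrt{\|\psi^B\|_\infty}+\sqrt{\|\phi^B\|_\infty})\,\|\ket{\psi^{AB}}-\ket{\phi^{AB}}\|_2 \le 2\sqrt{a/|B|}\,\|\ket{\psi^{AB}}-\ket{\phi^{AB}}\|_2 = \sqrt{4a/|B|}\,\|\ket{\psi^{AB}}-\ket{\phi^{AB}}\|_2$, which is exactly the claim. The main obstacle — and the only place genuine care is needed — is the bound $\|\tr_A(\ket{u}\bra{v})\|_2\le\sqrt{\|u^B\|_\infty}\,\|\ket v\|_2$: one must pick the right basis expansion and correctly identify which reduced operator's norm appears (it is the one on the \emph{traced-out-vector}'s side, but since $\|u^A\|_\infty=\|u^B\|_\infty$ this is harmless), and one must make sure the hypothesis $\|\psi^B\|_\infty\le a/|B|$ is actually the quantity that shows up after the telescoping — it is, because the operators $\ket{u}\bra{u}$ in the argument are $\ket{\psi^{AB}}\bra{\psi^{AB}}$ and $\ket{\phi^{AB}}\bra{\phi^{AB}}$ whose $A$-marginals are $\psi^A$ and $\phi^A$, with $\|\psi^A\|_\infty=\|\psi^B\|_\infty\le a/|B|$. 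Everything else is routine triangle-inequality bookkeeping.
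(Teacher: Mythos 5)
Your plan is correct and reaches the stated bound with the right constant, but it is a genuinely different route from the paper's. The paper pins down a pinching $M$ in the eigenbasis of $\psi^B$, uses monotonicity of $\|\cdot\|_2$ under pinching to reduce to the commuting case, factors $p_k-q_k=(\sqrt{p_k}-\sqrt{q_k})(\sqrt{p_k}+\sqrt{q_k})$, and then invokes monotonicity of fidelity under CPTP maps (applied first to the pinching, then to $\tr_A$) to land on $2-2F(\psi^{AB},\phi^{AB})=\|\ket{\psi^{AB}}-\ket{\phi^{AB}}\|_2^2$. You instead use the reverse triangle inequality to reduce to $\|\psi^B-\phi^B\|_2$, telescope the rank-one projectors, and bound each piece via the Schatten--H\"older-type inequality $\|\tr_A(\ket{u}\bra{v})\|_2\le\sqrt{\|u^B\|_\infty}\,\|\ket{v}\|_2$ (equivalently $\|UV^\dagger\|_2\le\|U\|_\infty\|V\|_2$ after identifying bipartite vectors with matrices), together with $\|u^A\|_\infty=\|u^B\|_\infty$ for pure bipartite vectors. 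Your route is more elementary: it avoids fidelity and its CPTP-monotonicity altogether, using only Cauchy--Schwarz and the equal-spectra fact. One small detail you gloss over: for the second telescoped term $\tr_A(\ket{\psi-\phi}\bra{\phi})$ you must either first take the Hermitian conjugate, $\|\tr_A(\ket{\psi-\phi}\bra{\phi})\|_2=\|\tr_A(\ket{\phi}\bra{\psi-\phi})\|_2$, or use the symmetric form of your key inequality $\|\tr_A(\ket{u}\bra{v})\|_2\le\sqrt{\|v^B\|_\infty}\,\|\ket{u}\|_2$, so that the controlled operator norm $\|\phi^B\|_\infty$ appears rather than the uncontrolled $\|(\psi-\phi)^B\|_\infty$. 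With that made explicit the argument is complete.
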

\begin{proof}
We assume without loss of generality that $\left \Vert \psi^B - \id/|B| \right \Vert_2 \geq \left \Vert \phi^B - \id/|B| \right \Vert_2$. Let $\{ \ket{i} \}_{i=1}^{\text{rank}(\psi^B)}$ be an eigenbasis for $\psi^B$ and define $M(\rho) := \sum_{i} \bra{i}\rho \ket{i} \ket{i}\bra{i}$. Then,
\begin{eqnarray} \label{Lip1}
\left | \left \Vert \psi^B - \frac{\id}{|B|} \right \Vert_2 - \left \Vert \phi^B - \frac{\id}{|B|} \right \Vert_2 \right | &=&  \left \Vert \psi^B - \frac{\id}{|B|} \right \Vert_2 - \left \Vert \phi^B - \frac{\id}{|B|} \right \Vert_2 \nonumber \\ &\leq& \left \Vert M(\psi^B) - \frac{\id}{|B|} \right \Vert_2 - \left \Vert M(\phi^B) - \frac{\id}{|B|} \right \Vert_2 \nonumber \\ &\leq & \Vert M(\psi^B) - M(\phi^B) \Vert_2,
\end{eqnarray}
where the first inequality follows from Lemma \ref{monotonicity2infty}, and the second inequality from the triangle inequality. 

Let $\{ p_k \}_{k}$ and $\{ q_k \}_{k}$ be the eigenvalues of $M(\psi^B) = \psi^B$ and $M(\phi^B)$, respectively. Since $||\psi^B||_{\infty}, || \phi^B ||_{\infty} \leq a/|B|$, we find from Lemma \ref{monotonicity2infty} that $(\max_{k}p_k), (\max_{k}q_k) \leq a/|B|$. Hence
\begin{eqnarray} \label{Lip2}
\Vert M(\psi^B) - M(\phi^B) \Vert_2^2 &=& \sum_k (p_k - q_k)^2 \\ &=& \sum_k (\sqrt{p_k} - \sqrt{q_k})^2(\sqrt{p_k} + \sqrt{q_k})^2 \nonumber \\ &\leq& \frac{4a}{|B|} \sum_k (\sqrt{p_k} - \sqrt{q_k})^2 \nonumber \\ &=& \frac{4a}{|B|}\left( 2 - 2 F(M(\psi^B), M(\phi^B))  \right) \nonumber \\ &\leq&  \frac{4a}{|B|}\left( 2 - 2 F(\psi^B, \phi^B)  \right) \nonumber \\  &\leq&  \frac{4a}{|B|}\left( 2 - 2 F(\psi^{AB}, \phi^{AB})  \right) = \frac{4a}{|B|} || \ket{\psi^{AB}} - \ket{\phi^{AB}} ||_2^2, \nonumber
\end{eqnarray}
where the last two inequalities follows from the monotonicity of the fidelity under trace preserving CP maps. Putting Eqs. (\ref{Lip1}) and (\ref{Lip2}) together gives the result.
\end{proof}

The next lemma gives an upper bound on the median of the function $g$. 

\begin{lemma} \label{median}
Let $g : \mathbb{S}^{2|A||B|} \rightarrow \mathbb{R}$ be such that
\begin{equation}
g(\ket{\psi^{AB}}) := \left \Vert \psi^B - \frac{\id}{|B|} \right \Vert_2
\end{equation}
and $m(g)$ be the median of $g$. Then $m(g) \leq 2|A|^{-\frac{1}{2}}$.
\end{lemma}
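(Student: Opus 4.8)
The plan is to control the second moment of $g$ over the Haar measure on the sphere and then read off the bound on the median from Markov's inequality. First note that, expanding the Hilbert--Schmidt norm and using $\Tr(\psi^B)=1$ together with $\Tr((\id/|B|)^2)=1/|B|$,
\begin{equation}
g(\ket{\psi^{AB}})^2 \;=\; \left\Vert \psi^B - \frac{\id}{|B|}\right\Vert_2^2 \;=\; \Tr\!\big((\psi^B)^2\big) - \frac{1}{|B|},
\end{equation}
so everything reduces to evaluating $\mathbb{E}\big[\Tr((\psi^B)^2)\big]$ for $\ket{\psi^{AB}}$ Haar distributed.

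For that I would invoke the standard second-moment identity for Haar random pure states, $\mathbb{E}\big[\psi^{AB}\otimes\psi^{AB}\big] = (\id + \mathbb{F})\big/\big(|A||B|(|A||B|+1)\big)$, where $\mathbb{F}$ is the swap operator on $(A\otimes B)^{\otimes 2}$ (this is twice the normalized projector onto the symmetric subspace, a consequence of Schur--Weyl duality). Writing the purity of $\psi^B$ as a swap expectation, $\Tr((\psi^B)^2) = \Tr\!\big[(\id_{AA'}\otimes\mathbb{F}_{BB'})(\psi^{AB}\otimes\psi^{A'B'})\big]$, and taking the expectation, the identity and swap terms contribute $\Tr[\id_{AA'}\otimes\mathbb{F}_{BB'}] = |A|^2|B|$ and $\Tr[\mathbb{F}_{AA'}\otimes\id_{BB'}] = |A||B|^2$ respectively, giving
\begin{equation}
\mathbb{E}\big[\Tr((\psi^B)^2)\big] \;=\; \frac{|A|^2|B| + |A||B|^2}{|A||B|(|A||B|+1)} \;=\; \frac{|A|+|B|}{|A||B|+1}.
\end{equation}

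Combining the two displays yields $\mathbb{E}[g^2] = \frac{|A|+|B|}{|A||B|+1} - \frac{1}{|B|} = \frac{|B|^2-1}{|B|(|A||B|+1)} < \frac{1}{|A|}$. Markov's inequality then gives
\begin{equation}
\Pr\!\left( g(\ket{\psi^{AB}}) > \frac{2}{\sqrt{|A|}} \right) \;\leq\; \Pr\!\left( g(\ket{\psi^{AB}})^2 \geq \frac{4}{|A|} \right) \;\leq\; \frac{|A|}{4}\,\mathbb{E}[g^2] \;<\; \frac14 \;<\; \frac12,
\end{equation}
so strictly more than half of the measure of the sphere lies in $\{g \leq 2|A|^{-1/2}\}$, which is exactly the assertion $m(g) \leq 2|A|^{-1/2}$. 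There is no genuine obstacle here: the only non-elementary input is the second-moment identity, which is entirely routine. If one prefers to stay within the measure-concentration language used in the rest of the section, an alternative is to bound $\mathbb{E}\Vert\psi^B-\id/|B|\Vert_2 \leq (\mathbb{E}[g^2])^{1/2} \leq |A|^{-1/2}$ by Jensen and then apply Lévy's Lemma (Lemma \ref{LevysLemma}) to the $2$-Lipschitz function $g$ to pass from the mean to the median; but the Markov route is shorter and delivers the constant $2$ with no dimension-dependent slack.
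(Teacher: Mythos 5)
Your proof is correct and uses the same key computation as the paper: the Schur--Weyl second-moment identity to evaluate $\mathbb{E}[g^2] = \mathbb{E}\Tr((\psi^B)^2) - 1/|B| < 1/|A|$. The only divergence is cosmetic and occurs at the very last step: you apply Markov's inequality to $g^2$ to conclude $\Pr(g > 2/\sqrt{|A|}) < 1/4 < 1/2$, whereas the paper observes directly that $\mathbb{E}g \geq m(g)/2$ (integrating only over $\{g \geq m(g)\}$, which has measure $\geq 1/2$) and then uses Jensen, $\mathbb{E}g \leq \sqrt{\mathbb{E}[g^2]}$. These are two dressings of the same elementary Markov-type estimate and yield the identical constant $2$. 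Your concluding remark about replacing this with a Lévy's-Lemma passage from mean to median is not what the paper does, but since your main argument matches the paper's in substance, no issue arises.
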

\begin{proof}
We start by bounding the median by the expectation value of $g$ as follows
\begin{equation}
\mathbb{E}g = \int_{g \geq m(g)} g(\psi) \mu(d \psi) + \int_{g \leq m(g)} g(\psi) \mu(d \psi) \geq m(g) \int_{g \geq m(g)} \mu(d \psi) = \frac{m(g)}{2}.
\end{equation}
We proceed by lower bounding the expectation value of $g(\ket{\psi})$, 
\begin{equation} \label{expect1}
(\mathbb{E} g)^2 \leq \mathbb{E}(g^2) = \mathbb{E}\left( \tr((\psi^B)^2)\right) - \frac{1}{|B|} = tr\left( \mathbb{E}\left( \psi^{AB} \otimes \psi^{A'B'}  \right) \id_{AA'} \otimes \mathbb{F}^{BB'}\right) - \frac{1}{|B|},
\end{equation}
where $\mathbb{F}^{BB'}$ is the swap operator the two systems $BB'$. The first inequality of the equation above follows from the convexity of $x^2$. From Schur's Lemma,
\begin{equation} \label{expect2}
\mathbb{E} \left( \psi^{AB} \otimes \psi^{A'B} \right) = \frac{\id^{AA'BB'} + \mathbb{F}^{AA'}\otimes \mathbb{F}^{BB'}}{|A||B|(|A||B| + 1)}.
\end{equation}
Putting Eqs. (\ref{expect1}) and (\ref{expect2}) together gives $m(g) \leq 2|A|^{-\frac{1}{2}}$.  
\end{proof}

The final lemma of this section shows the monotonicity of the operator and two norms under pinching. 

\begin{lemma} \label{monotonicity2infty}
For every $X$,
\begin{equation}
\Vert X \Vert_2 \geq \left \Vert \sum_k P_k X P_k \right \Vert_2, \hspace{0.2 cm} \Vert X \Vert_{\infty} \geq \left \Vert \sum_k P_k X P_k  \right \Vert_{\infty},
\end{equation}
for orthogonal projectors $P_k$ with $\sum_k P_k$ = 1.
\end{lemma}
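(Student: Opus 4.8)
The plan is to treat the two inequalities separately, exploiting the fact that the pinching map $\mathcal{P}(X) := \sum_k P_k X P_k$ is simultaneously a Hilbert--Schmidt orthogonal projection and an average of unitary conjugations.

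For the Hilbert--Schmidt ($\Vert\cdot\Vert_2$) bound I would decompose $X$ into its blocks with respect to the projectors, $X = \sum_{k,l} P_k X P_l$, and observe that these blocks are mutually orthogonal in the Hilbert--Schmidt inner product: $\tr\left((P_k X P_l)^{\dagger} (P_{k'} X P_{l'})\right) = 0$ unless $k = k'$ and $l = l'$, since $P_k P_{k'} = \delta_{kk'} P_k$. Pythagoras' theorem then yields $\Vert X \Vert_2^2 = \sum_{k,l} \Vert P_k X P_l \Vert_2^2 \geq \sum_k \Vert P_k X P_k \Vert_2^2 = \Vert \mathcal{P}(X) \Vert_2^2$, which is the claim. (Equivalently, $\mathcal{P}$ is self-adjoint and idempotent for the Hilbert--Schmidt inner product, hence a contraction in $\Vert\cdot\Vert_2$.)

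For the operator-norm ($\Vert\cdot\Vert_{\infty}$) bound I would express $\mathcal{P}$ as an average over unitary conjugations. Assuming there are $n < \infty$ projectors (which suffices here, as everything is finite-dimensional), set $\omega := e^{2\pi i/n}$ and $U := \sum_{k=1}^{n} \omega^{k} P_k$, which is unitary. Since $U^{m} = \sum_k \omega^{mk} P_k$ and $\sum_{m=0}^{n-1} \omega^{m(k-l)} = n\,\delta_{kl}$, one checks directly that $\frac{1}{n}\sum_{m=0}^{n-1} U^{m} X (U^{\dagger})^{m} = \sum_k P_k X P_k = \mathcal{P}(X)$. The operator norm is unitarily invariant and obeys the triangle inequality, so $\Vert \mathcal{P}(X) \Vert_{\infty} \leq \frac{1}{n}\sum_{m=0}^{n-1} \Vert U^{m} X (U^{\dagger})^{m} \Vert_{\infty} = \Vert X \Vert_{\infty}$. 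Alternatively, writing $\Vert \mathcal{P}(X) \Vert_{\infty} = \sup_{\Vert v \Vert = \Vert w \Vert = 1} | \langle v | \mathcal{P}(X) | w \rangle |$ and using self-adjointness of $P_k$ together with Cauchy--Schwarz, $| \langle v | \mathcal{P}(X) | w \rangle | = | \sum_k \langle P_k v | X | P_k w \rangle | \leq \Vert X \Vert_{\infty} \sum_k \Vert P_k v \Vert \Vert P_k w \Vert \leq \Vert X \Vert_{\infty} \left(\sum_k \Vert P_k v \Vert^2\right)^{1/2} \left(\sum_k \Vert P_k w \Vert^2\right)^{1/2} = \Vert X \Vert_{\infty}$, since $\sum_k P_k = \id$.

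There is essentially no obstacle: both are classical facts about pinchings, and the only mild subtlety is the operator-norm case with infinitely many projectors, where the finite unitary-averaging identity does not literally apply — but the Cauchy--Schwarz argument on matrix elements handles that case as well, and in any event all applications in this paper (e.g. in Lemma \ref{improvedLipschitz}, where the projectors are the rank-one spectral projections of $\psi^B$) involve only finitely many blocks.
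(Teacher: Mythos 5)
Your proof is correct. The paper itself discharges this lemma with the single phrase ``Direct calculation,'' so there is no authorial argument to compare against; you have simply supplied the standard pinching inequalities that the authors took for granted. Your two routes are both classical and airtight: the Hilbert--Schmidt bound follows from the block-orthogonality $\tr\bigl((P_k X P_l)^{\dagger}(P_{k'} X P_{l'})\bigr) = \delta_{kk'}\delta_{ll'}\,\tr(P_l X^{\dagger} P_k X P_l)$ and Pythagoras, and the operator-norm bound follows either from writing the pinching as the average $\frac{1}{n}\sum_{m=0}^{n-1} U^m X U^{-m}$ with $U = \sum_k e^{2\pi i k/n} P_k$ and using the triangle inequality plus unitary invariance, or from your Cauchy--Schwarz estimate on matrix elements. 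One small economy worth noting: in the only place the paper invokes this lemma (inside Lemma~\ref{improvedLipschitz}) the operator $X$ is Hermitian and the $P_k$ are its own rank-one spectral projections together with the projector onto its kernel, so one in fact has $\mathcal{P}(X) = X$ and $M(\phi^B)$ is a genuine pinching of $\phi^B$; your general statement of course covers this, and the generality costs nothing.
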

\begin{proof}
Direct calculation. 
\end{proof}

\section{Acknowledgement}
We thank Robert Alicki for sharing his analysis of Hastings' paper (which triggered us to undertake a similar study) and Graeme Smith for providing the proof of Lemma \ref{independetHaarmeasure}. This work was supported by EC IP SCALA and an EPSRC Postdoctoral Fellowship for Theoretical Physics. FB would like to thank the hospitality of the members of the National Quantum Information Centre of Gda\'nsk, where this work was done. 

\appendix                                                                    

\section{A few probability facts} \label{probabilityfacts}

\begin{lemma} \label{intersectionversusdiferenceinprob}
For two events $M, N$, $\Pr(M \cap N) \geq \Pr(M) - \Pr(N^c)$, where $N^c$ is the complement of $N$.
\end{lemma}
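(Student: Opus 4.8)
The statement to prove is the elementary inequality $\Pr(M \cap N) \geq \Pr(M) - \Pr(N^c)$ for two events $M, N$. The plan is to reduce this to the inclusion--exclusion bound $\Pr(M \cup N') \leq \Pr(M) + \Pr(N')$ applied to a suitable auxiliary event, or more directly to exploit the set-theoretic containment $M \subseteq (M \cap N) \cup N^c$.

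First I would establish the containment $M \subseteq (M\cap N)\cup N^c$: if $\omega \in M$, then either $\omega \in N$, in which case $\omega \in M\cap N$, or $\omega \notin N$, in which case $\omega \in N^c$. Applying monotonicity and subadditivity of the probability measure then yields
\[
\Pr(M) \leq \Pr\big((M\cap N)\cup N^c\big) \leq \Pr(M\cap N) + \Pr(N^c).
\]
Rearranging gives $\Pr(M\cap N) \geq \Pr(M) - \Pr(N^c)$, which is exactly the claim.

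There is no real obstacle here; the only thing to be careful about is invoking the correct two axioms (monotonicity under inclusion and finite subadditivity), both of which are immediate consequences of the standard axioms of a probability space. One could equivalently phrase it via $\Pr(M\cap N) = \Pr(M) - \Pr(M\cap N^c) \geq \Pr(M) - \Pr(N^c)$, using $\Pr(M) = \Pr(M\cap N) + \Pr(M\cap N^c)$ and $M\cap N^c \subseteq N^c$; this is perhaps the cleanest route and avoids even needing subadditivity explicitly. I would present this one-line version.
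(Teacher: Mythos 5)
Your proposal is correct, and the ``one-line version'' you say you would present---$\Pr(M)=\Pr(M\cap N)+\Pr(M\cap N^c)\leq\Pr(M\cap N)+\Pr(N^c)$, then rearrange---is exactly the paper's proof. The alternative route via $M\subseteq(M\cap N)\cup N^c$ and subadditivity is also valid but not what the paper uses.
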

\begin{proof}
We have 
\begin{equation}
\Pr(M) = \Pr(M \cap N) + \Pr(M \cap N^c) \leq Pr(M \cap N) + \Pr(N^{c}).
\end{equation}
Rearranging terms in the equation above gives the result of the lemma.
\end{proof}

\begin{lemma} \label{independetHaarmeasure}
Let $\ket{\chi} \in A$ be drawn from the Haar measure. Write
\begin{equation}
\ket{\chi} = \sqrt{x} \ket{\psi} + \sqrt{1 - x} \ket{\phi},
\end{equation}
where $\ket{\psi} \in A$ is a fixed state, $x = |\braket{\psi}{\chi}|^2$, and $\ket{\phi}$ is a state orthogonal to $\ket{\psi}$. Then $x$ and $\ket{\phi}$ are independent random variables and $\ket{\phi}$ is distributed accordingly to the Haar measure in the subspace of $A$ orthogonal to $\ket{\psi}$. 
\end{lemma}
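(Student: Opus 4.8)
The plan is to exploit the invariance of the Haar measure on the unit sphere of $A$ under the subgroup of unitaries fixing $\ket{\psi}$. Fix an orthonormal basis $\ket{1} = \ket{\psi}, \ket{2}, \dots, \ket{|A|}$ of $A$, and let $K \cong \mathbb{U}(|A|-1)$ denote the group of unitaries acting as the identity on $\ket{\psi}$ and arbitrarily on the orthogonal complement $\psi^\perp := \operatorname{span}\{\ket{2}, \dots, \ket{|A|}\}$. The first step is to observe that the decomposition $\ket{\chi} = \sqrt{x}\,e^{i\theta}\ket{\psi} + \sqrt{1-x}\,\ket{\phi}$, with $x = |\braket{\psi}{\chi}|^2$, $e^{i\theta}$ the phase of $\braket{\psi}{\chi}$, and $\ket{\phi} \in \psi^\perp$ a unit vector, is well-defined almost surely (i.e.\ whenever $0 < x < 1$ and $\braket{\psi}{\chi} \neq 0$, which holds with probability one). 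Absorbing the global phase (which is irrelevant since $\ket{\chi}$ represents a state, or can be fixed by a convention), we get the stated form.

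Next I would note that for any $V \in K$, the vector $V\ket{\chi}$ is again Haar-distributed, and under the decomposition it reads $V\ket{\chi} = \sqrt{x}\,\ket{\psi} + \sqrt{1-x}\,V\ket{\phi}$; that is, the $K$-action leaves $x$ invariant and acts on $\ket{\phi}$ by the induced Haar action on the unit sphere of $\psi^\perp$. Since the joint law of $(x, \ket{\phi})$ is invariant under $V \mapsto (x, V\ket{\phi})$ for all $V \in K$, and since the Haar measure on the unit sphere of $\psi^\perp$ is the unique $K$-invariant probability measure there, the conditional law of $\ket{\phi}$ given $x$ must be exactly this Haar measure, regardless of the value of $x$. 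A conditional distribution that does not depend on the conditioning variable is precisely the statement of independence, and it simultaneously identifies the marginal of $\ket{\phi}$ as Haar. This is the crux of the argument; the remaining point, that $\ket{\phi}$'s conditional law is $K$-invariant, follows because $K$ commutes with the map $\ket{\chi} \mapsto x$ and intertwines $\ket{\chi} \mapsto \ket{\phi}$ with the $K$-action.

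The step I expect to require the most care is the measure-theoretic bookkeeping: making precise that ``the conditional law of $\ket{\phi}$ given $x$ is $K$-invariant and hence equals Haar for almost every $x$'' is a disintegration argument, and one should check that the null set where the decomposition fails does not cause trouble. One clean way to avoid disintegrations entirely is to argue directly at the level of expectations: for every bounded measurable $f$ on $[0,1]$ and $g$ on the unit sphere of $\psi^\perp$, the quantity $\mathbb{E}[f(x)\,g(\ket{\phi})]$ is unchanged if $g$ is replaced by its $K$-average $\bar g(\ket{\phi}) := \int_K g(V\ket{\phi})\,dV$, because applying $V \in K$ to $\ket{\chi}$ preserves the law of $\ket{\chi}$ and hence of $(x, \ket{\phi})$; but $\bar g$ is a constant equal to $\int g\,d\mu_{\psi^\perp}$, so $\mathbb{E}[f(x)\,g(\ket{\phi})] = \mathbb{E}[f(x)]\cdot \int g\,d\mu_{\psi^\perp}$, which is exactly independence together with the identification of the marginal of $\ket{\phi}$. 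I would present the proof in this Fubini-style form to keep it short and rigorous. (As the acknowledgement indicates, this argument is due to Graeme Smith.)
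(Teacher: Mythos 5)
Your proposal is correct and takes essentially the same route as the paper: both arguments rest on the invariance of the Haar measure under the subgroup of unitaries fixing $\ket{\psi}$, together with the uniqueness of the Haar measure on the orthogonal complement, to conclude that the conditional law of $\ket{\phi}$ given $x$ is Haar and hence independent of $x$. Your Fubini-style reformulation via test functions is a modest technical refinement (it sidesteps the paper's implicit assumption that joint and conditional densities exist, and you also correctly flag the phase convention on $\braket{\psi}{\chi}$ that the paper leaves tacit), but the underlying idea is identical.
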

\begin{proof}
Let $p_{A}(\ket{\psi})$ be the probability density function associated with the Haar measure in $A$. We can write $p_{A}(\ket{\psi}) = p_{A}(x, \ket{\phi})$. From the invariance of the Haar measure under unitary transformations, $p_{A}(U\ket{\psi}) = p_{A}(x, U\ket{\phi})$, for every $x$ and every unitary $U$ which acts non-trivially only in the subspace of $A$ orthogonal to $\ket{\psi}$, $A_{\psi^{\bot}}$. Therefore, the conditional probability density function
\begin{equation}
p_{A}(\ket{\phi} \hspace{0.05 cm} | \hspace{0.05 cm} x) = \frac{p_{A}(\ket{\phi}, x)}{p_{A}(x)}
\end{equation}
is such that $p_{A}(U\ket{\phi} \hspace{0.05 cm}|\hspace{0.05 cm} x) = p_{A}(\ket{\phi} \hspace{0.05 cm}|\hspace{0.05 cm} x)$ for every $x$ and unitary $U$ acting on $A_{\psi^{\bot}}$. From the uniqueness of the Haar measure, we find that for every $x$, $p_{A}(\ket{\phi} \hspace{0.05 cm}|\hspace{0.05 cm} x) = p_{A_{\psi^{\bot}}}(\ket{\phi})$. This shows both that $\ket{\phi}$ is independent of $x$ and that it is Haar distributed.     
\end{proof}

\section{Relating operator norm, two norm, and entropy} \label{operatornormversusentropy}

\begin{lemma} \label{entversusmaxeing1}
Let $\rho \in{\cal D}(\mathbb{C}^D)$ be such that $\Vert \rho \Vert_{\infty} \geq \lambda > 1/D$. Then 
\begin{equation}
S(\rho) \leq s(\lambda, D) := (1 - \lambda) \log(D-1) + h(\lambda),
\end{equation}
and
\begin{equation}
\Vert \rho \Vert_{2}^2 \geq \lambda^2 + \frac{(1 - \lambda)^2}{D - 1}, 
\end{equation}
where $h(x) := - x \log x - (1 - x)\log(1 - x)$ is the Shannon binary entropy. 
\end{lemma}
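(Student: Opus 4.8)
The plan is to pass to the spectral decomposition of $\rho$ and reduce both claims to elementary one-variable estimates. Write the eigenvalues of $\rho$ in non-increasing order as $p_1 \geq p_2 \geq \cdots \geq p_D \geq 0$, so that $p_1 = \Vert \rho \Vert_\infty \geq \lambda$ and $\sum_{i=1}^D p_i = 1$. The whole argument rests on comparing two functions of the top eigenvalue $p_1$ with their values at $p = \lambda$, which is where the hypothesis $\lambda > 1/D$ is used.

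For the entropy bound, I would first split off the largest eigenvalue and write
\begin{equation}
S(\rho) = h(p_1) + (1 - p_1)\, S\!\left( \frac{p_2}{1 - p_1}, \ldots, \frac{p_D}{1 - p_1} \right) \leq h(p_1) + (1 - p_1)\log(D - 1),
\end{equation}
using that the Shannon entropy of a probability distribution on $D - 1$ outcomes is at most $\log(D-1)$ (the case $p_1 = 1$ being trivial). It then remains to check that $f(p) := h(p) + (1 - p)\log(D - 1)$ is non-increasing on $[1/D, 1]$: since $h'(p) = \log\frac{1-p}{p}$, one has $f'(p) = \log\frac{1-p}{(D-1)p} \leq 0$ for $p \geq 1/D$. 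As $p_1 \geq \lambda > 1/D$, this gives $S(\rho) \leq f(p_1) \leq f(\lambda) = s(\lambda, D)$.

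For the two-norm bound, I would write $\Vert \rho \Vert_2^2 = \sum_i p_i^2 = p_1^2 + \sum_{i \geq 2} p_i^2$ and bound the tail by Cauchy--Schwarz applied to the $D - 1$ numbers $p_2, \ldots, p_D$, whose sum is $1 - p_1$:
\begin{equation}
\sum_{i \geq 2} p_i^2 \geq \frac{(1 - p_1)^2}{D - 1}, \qquad \text{hence} \qquad \Vert \rho \Vert_2^2 \geq g(p_1), \quad g(p) := p^2 + \frac{(1-p)^2}{D-1}.
\end{equation}
Finally $g'(p) = 2p - \frac{2(1-p)}{D-1} = \frac{2(pD - 1)}{D - 1} \geq 0$ for $p \geq 1/D$, so $g$ is non-decreasing there, and $p_1 \geq \lambda > 1/D$ yields $\Vert \rho \Vert_2^2 \geq g(\lambda) = \lambda^2 + \frac{(1-\lambda)^2}{D-1}$.

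The only point requiring any care is the last move in each part: the direct estimates produce bounds in terms of the actual top eigenvalue $p_1$, and to obtain the stated inequalities one must replace $p_1$ by the a priori lower bound $\lambda$. This substitution is legitimate precisely because $f$ is decreasing and $g$ is increasing on $[1/D, 1]$ --- which is exactly where the assumption $\lambda > 1/D$ (rather than merely $\lambda \geq 0$) enters; everything else is routine calculus.
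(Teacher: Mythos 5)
Your proof is correct, but it takes a genuinely different route from the paper. The paper proves both estimates at once via majorization: it shows that the eigenvalue vector $(\lambda_1, \ldots, \lambda_D)$ majorizes the distribution $q = \bigl(\lambda_1, \tfrac{1-\lambda_1}{D-1}, \ldots, \tfrac{1-\lambda_1}{D-1}\bigr)$, and then invokes Schur-convexity of $x \log x$ (for the entropy bound) and of $x^2$ (for the two-norm bound); the final monotonicity step in $\lambda$ is the same in both proofs. You instead treat the two claims with two separate elementary tools --- the grouping/chain rule for Shannon entropy plus the trivial $\log(D-1)$ bound for the entropy estimate, and Cauchy--Schwarz for the two-norm estimate. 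The intermediate bounds you reach, $S(\rho) \leq h(p_1) + (1-p_1)\log(D-1)$ and $\Vert\rho\Vert_2^2 \geq p_1^2 + (1-p_1)^2/(D-1)$, are exactly what the paper gets by noting that $q$ achieves them, so the two arguments are in close correspondence; the paper's approach is more unified (one majorization inequality feeds both conclusions), while yours avoids the majorization machinery entirely in favor of two short, self-contained estimates. Both are complete and correct.
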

\begin{proof}
Let $\lambda_i$ be the eigenvalues of $\rho$ in decreasing order. Then, for every $N \in \{1, ..., D \}$, 
\begin{equation}
\sum_{i=1}^N \lambda_i \geq \lambda_1 + (N-1)\frac{(1 - \lambda_1)}{D - 1},
\end{equation}
which shows that $\{ \lambda_i \}$ is majorized by the probability distribution $q := \left \{  \lambda_1, \frac{(1 - \lambda_1)}{D - 1}, ..., \frac{(1 - \lambda_1)}{D - 1} \right \}$. From the Schur convexity of $x \log x$, 
\begin{equation}
S(\rho) \leq S(q) = s(\lambda_1, D) := (1 - \lambda_1) \log(D-1) + h(\lambda_1). 
\end{equation}
A simple calculation shows that $\frac{\partial s(\mu)}{\partial \lambda} \leq 0$ for all $\mu \geq 1/D$. Therefore, the function $s(\lambda, D)$ is monotonic decreasing in $\lambda$ for $\lambda \geq 1/D$. As $\lambda_1 = \Vert \rho \Vert_{\infty} \geq \lambda$, we find that $S(\rho) \leq s(\lambda, D)$. 

The bound on the two norm can be obtained in an analogous way. As $x^2$ is Schur convex, we get that 
\begin{equation}
\Vert \rho \Vert_{2}^2 \geq \Vert q \Vert_2^2 = r(\lambda_1, D) := \lambda_1^2 + \frac{(1 - \lambda_1)^2}{D - 1}.
\end{equation}
A simple calculation shows that $r(\lambda_1, D)$ is monotonic increasing in $\lambda_1$, so that $r(\lambda_1, D) \geq r(\lambda, D)$. 
\end{proof}

\section{Large deviation bound for the operator norm} \label{HHLLemmasection} 

The following lemma, due to Harrow, Hayden, and Leung \cite{HHL04} is used twice in the proof of Proposition \ref{MainLemma2}. 

\begin{lemma} \label{HHLLemma}
(Lemma III.4 of \cite{HLW06}) Let $\ket{\psi^{AB}} \in A \otimes B$ be drawn from the Haar measure. For every $0 < \varepsilon < 1$,
\begin{equation}
\Pr_{\psi}\left( \left \Vert \psi^B \right \Vert_{\infty} \geq \frac{1}{|B|} + \frac{\varepsilon}{|B|}  \right) \leq  \left( \frac{10 |B|}{\varepsilon}\right)^{2 |B|} \exp \left(- |A| \frac{\varepsilon^2}{14 \ln(2)}   \right),
\end{equation}
while for every $\varepsilon > 0$ \cite{HLS05}
\begin{equation}
\Pr_{\psi}\left( \left \Vert \psi^B \right \Vert_{\infty} \geq \frac{1}{|B|} + \frac{\varepsilon}{|B|}  \right) \leq  \left( \frac{10 |B|}{\varepsilon}\right)^{2 |B|} \exp \left(- |A| \frac{\left(\varepsilon - \log(1 + \varepsilon)\right)}{14 \ln(2)}   \right),
\end{equation}

\end{lemma}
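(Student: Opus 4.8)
The bound is the standard combination of an $\varepsilon$-net argument with an exponential-moment estimate, and the plan is to follow the route of \cite{HLW06}. First I would discretize the operator norm. Choose a $\delta$-net $\mathcal{N}$ of the unit sphere of $B$ with $|\mathcal{N}| \leq (1 + 2/\delta)^{2|B|}$. If $\ket{v}$ is a maximal eigenvector of $\psi^B$ and $\ket{\varphi} \in \mathcal{N}$ satisfies $\Vert \ket{v} - \ket{\varphi}\Vert_2 \leq \delta$, then $|\bra{v}\psi^B\ket{v} - \bra{\varphi}\psi^B\ket{\varphi}| \leq 2\delta \Vert \psi^B\Vert_\infty$, whence $\Vert \psi^B\Vert_\infty \leq (1 - 2\delta)^{-1}\max_{\ket{\varphi}\in\mathcal{N}}\bra{\varphi}\psi^B\ket{\varphi}$. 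A union bound over $\mathcal{N}$ then reduces the task to bounding, for a single fixed unit vector $\ket{\varphi} \in B$, the probability that $\bra{\varphi}\psi^B\ket{\varphi}$ exceeds a threshold of the form $(1+\delta')/|B|$, where $\delta' := (1-2\delta)(1+\varepsilon) - 1$ is a constant multiple of $\varepsilon$ once $\delta$ is taken proportional to $\varepsilon$.

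For the single-vector estimate I would pass to the Gaussian model of the Haar measure: write $\ket{\psi^{AB}} = \ket{g}/\Vert\ket{g}\Vert_2$, where $\ket{g}$ has i.i.d.\ standard complex Gaussian entries. Then
\begin{equation}
\bra{\varphi}\psi^B\ket{\varphi} = \frac{W}{W + V}, \qquad W := \big\Vert (\id \otimes \bra{\varphi})\ket{g}\big\Vert_2^2 ,
\end{equation}
where $W$ is a sum of $|A|$ i.i.d.\ unit-mean exponential random variables and $V$ is an independent sum of $|A|(|B|-1)$ such variables (this decoupling is just rotational invariance of the complex Gaussian). Hence, in the relevant regime, the event $\bra{\varphi}\psi^B\ket{\varphi} \geq (1+\delta')/|B|$ is the event $W - \lambda V \geq 0$ with $\lambda := (1+\delta')/(|B|-1-\delta')$, and a Chernoff bound gives, for $0 < t < 1$,
\begin{equation}
\Pr(W - \lambda V \geq 0) \leq \EE(e^{tW})\,\EE(e^{-t\lambda V}) = (1-t)^{-|A|}(1 + t\lambda)^{-|A|(|B|-1)} .
\end{equation}
Optimizing over $t$ (the minimizer turns out to be $t = \delta'/(1+\delta')$) and using $\log(1-x) \leq -x$ collapses the right-hand side to $\exp(-|A|(\delta' - \log(1+\delta')))$ for one vector.

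Putting the two pieces together yields a bound of the shape $(1 + 2/\delta)^{2|B|}\exp(-|A|(\delta' - \log(1+\delta')))$. Choosing $\delta$ a suitable fixed fraction of $\varepsilon$ makes $\delta' \geq c\,\varepsilon$ for a constant $c$ while $1 + 2/\delta \leq 10|B|/\varepsilon$, which gives the second displayed inequality of the lemma (the version attributed to \cite{HLS05}). The first inequality then follows from the elementary bound $x - \log(1+x) \geq c' x^2$ for $0 < x < 1$, the loss being absorbed into the constant $14\ln 2$.

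\textbf{Main obstacle.} Conceptually nothing is hard here; the work is entirely in the bookkeeping. One must keep the discretization error small enough that the surviving per-vector threshold is still $(1 + \Theta(\varepsilon))/|B|$ (which is why $|B|$ should not be too large relative to $1/\varepsilon$), and then pin down the numerical constants in the net cardinality and in the Chernoff exponent carefully enough that everything fits inside the stated $10|B|/\varepsilon$ and $14\ln 2$. Since the statement is exactly Lemma III.4 of \cite{HLW06}, in the present paper it suffices to invoke that reference.
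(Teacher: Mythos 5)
The paper itself does not prove this lemma; it simply cites Lemma III.4 of \cite{HLW06} (and \cite{HLS05} for the second inequality), which is the legitimate move here. Your sketch correctly reconstructs the underlying argument from that reference — the $\delta$-net discretization of the unit sphere of $B$, the passage to the Gaussian model so that $\bra{\varphi}\psi^B\ket{\varphi} = W/(W+V)$ with $W$ and $V$ independent sums of exponentials, the Chernoff bound with optimizer $t=\delta'/(1+\delta')$, and the final collapse via $\log(1-x)\le -x$ to $\exp(-|A|(\delta'-\log(1+\delta')))$ — and your closing observation that invoking the reference suffices in the present paper is exactly what the authors do.

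Two minor remarks on the bookkeeping at the end. First, your net gives a prefactor of order $(O(1)/\varepsilon)^{2|B|}$, which is strictly smaller than the stated $(10|B|/\varepsilon)^{2|B|}$, so no further work is needed there; you do not need to match the $|B|$-dependent prefactor, only to stay below it. Second, since $\delta'\ge c\,\varepsilon$ forces you to pay a constant $c<1$ inside $x-\log(1+x)$, you should check that $c^2/2\ge 1/(14\ln 2)$, i.e. $c\gtrsim 0.32$; your proposed $\delta\le\varepsilon/8$ yields $c=1/2$ for $\varepsilon<1$, which comfortably clears this, so the constants do in fact fit inside the stated exponent. The sketch is correct.
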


\section{Proof of Lemma \ref{Hayden}} \label{Haydenproof}

Following Refs. \cite{Hay07, HW08}, we use the canonical maximally entangled state $\ket{\Phi^{AA'}} := |A|^{-1}\sum_{i=1}^{|A|} \ket{i}^A\ket{i}^{A'}$ as an input state to 
\begin{equation}
{\cal E} \otimes \overline{{\cal E}}(\rho) = \tr_{AA'}\left((U \otimes U^*)\left(\rho^{AA'} \otimes \ket{0}\bra{0}^{B} \otimes \ket{0}\bra{0}^{B'} \right)(U \otimes U^*)^{\cal y}\right),
\end{equation}
where $U$ acts on $AB$ and $U^*$ on $A'B'$.

We can get a lower bound on the operator norm of ${\cal E} \otimes \overline{{\cal E}}(\Phi^{AA'})$ as follows
\begin{eqnarray}
\left \Vert {\cal E} \otimes \overline{{\cal E}}(\Phi^{AA'}) \right \Vert_{\infty} &\geq& \tr\left(\Phi^{BB'}{\cal E} \otimes \overline{{\cal E}}(\Phi^{AA'})\right) \nonumber \\ 
&=& \tr\left(\Phi^{BB'}\tr_{AA'}\left((U \otimes U^*)\left(\Phi^{AA'} \otimes \ket{0}\bra{0}^{B} \otimes \ket{0}\bra{0}^{B'} \right)(U \otimes U^*)^{\cal y}\right)\right) \nonumber \\ &=& \tr\left(\id^{AA'} \otimes \Phi^{BB'}\left((U \otimes U^*)\left(\Phi^{AA'} \otimes \ket{0}\bra{0}^{B} \otimes \ket{0}\bra{0}^{B'} \right)(U \otimes U^*)^{\cal y}\right)\right) \nonumber \\ &\stackrel{(i)}{\geq}& \tr\left(\Phi^{AA'} \otimes \Phi^{BB'}\left((U \otimes U^*)\left(\Phi^{AA'} \otimes \ket{0}\bra{0}^{B} \otimes \ket{0}\bra{0}^{B'} \right)(U \otimes U^*)^{\cal y}\right)\right) \nonumber \\ &=& \tr\left((U \otimes U^*)^{\cal y}\Phi^{AA'}\otimes \Phi^{BB'}(U \otimes U^*)\left(\Phi^{AA'} \otimes \ket{0}\bra{0}^{B} \otimes \ket{0}\bra{0}^{B'} \right)\right) \nonumber \\ &\stackrel{(ii)}{=}& \tr\left(\Phi^{AA'}\otimes \Phi^{BB'}\left(\Phi^{AA'} \otimes \ket{0}\bra{0}^{B} \otimes \ket{0}\bra{0}^{B'} \right)\right) \geq \frac{1}{|B|}.
\end{eqnarray}
In $(i)$ we used $\Phi^{BB'} \leq \id$, while $(ii)$ follows from the identity $\left(\id^C \otimes X^{C'}\right) \ket{\Phi^{CC'}} = \left( \left(X^C\right)^T \otimes \id^{C'}\right) \ket{\Phi^{CC'}}$.

Applying Lemma \ref{entversusmaxeing1} to ${\cal E} \otimes \overline{{\cal E}}(\Phi^{AA'})$, with $D = |B|^2$ and $\lambda = |B|^{-1}$ then gives
\begin{equation}
S \left({\cal E} \otimes \overline{{\cal E}}(\Phi^{AA'}) \right) \leq s(|B|^{-1}, |B|^2) = 2 \log|B| - \frac{\log |B|}{|B|}.
\end{equation}

\end{document}